\newcounter{bibcounter}
\newcommand{\NP}{{\sf NP}}
\newcommand{\FPT}{{\sf FPT}}
\newcommand{\W}{{\sf W}}
\newcommand{\F}{\mathcal{F}}
\spnewtheorem*{Proof}{Proof}{\itshape}{\rmfamily}
\renewenvironment{proof}{\begin{Proof}}{\qed\end{Proof}}
\begin{document}
\title{Finding Shortest Paths between\\ Graph Colourings\thanks{Supported by EPSRC (EP/G043434/1), by a Scheme 7 grant from the London Mathematical Society, and by the German Research Foundation (KR 4286/1).}
}
\author{Matthew Johnson\inst{1}\and Dieter Kratsch\inst{2}\and Stefan Kratsch\inst{3} \and\\ Viresh Patel\inst{4} \and Dani\"el Paulusma\inst{1}
}
 \institute{
 School of Engineering and  Computing Sciences, Durham University,\\
 Science Laboratories, South Road,
 Durham DH1 3LE, United Kingdom
 \texttt{\{matthew.johnson2,daniel.paulusma\}@durham.ac.uk}
 \and
 Laboratoire d'Informatique Th\'eorique et Appliqu\'ee, Universit\'e de Lorraine,\\ 57045 Metz Cedex 01, France, \texttt{dieter.kratsch@univ-lorraine.fr}
 \and
 Institut f\"ur Softwaretechnik und Theoretische Informatik,\\
 Technische Universit\"at Berlin, Germany, \texttt{stefan.kratsch@tu-berlin.de}
 \and
 School of Mathematical Sciences,
 Queen Mary, University of London,\\
 Mile End Road, London E1 4NS, United Kingdom
 \texttt{viresh.patel@qmul.ac.uk}
 }
\maketitle

\begin{abstract}
The $k$-colouring reconfiguration problem asks whether, for a given graph $G$, two proper $k$-colourings $\alpha$ and $\beta$ of $G$, and a positive integer $\ell$, there exists a sequence of at most $\ell+1$ proper $k$-colourings of $G$ which starts with $\alpha$ and ends with $\beta$ and where successive colourings in the sequence differ on exactly one vertex of $G$. 
We give a complete picture of the parameterized complexity of the $k$-colouring reconfiguration problem for each fixed $k$ when parameterized by $\ell$.
First we show that the $k$-colouring reconfiguration problem is polynomial-time solvable for $k=3$, settling an 
open problem of 
Cereceda, van den Heuvel and Johnson.
Then,  for all $k \geq 4$, we show that the $k$-colouring reconfiguration problem, when parameterized by 
$\ell$, is fixed-parameter tractable  (addressing a question of Mouawad, Nishimura, Raman, Simjour and Suzuki) but that it has no polynomial kernel unless the polynomial hierarchy collapses.
\end{abstract}

\section{Introduction}\label{sec:intro}

Graph colouring has its origin in a nineteenth century map colouring problem and has now been an active area of research for more than 150 years, finding many applications within and beyond Computer Science and Mathematics. Given a graph $G=(V,E)$ and a positive integer $k$, 
a {\it $k$-colouring} of $G$ is a map $c\colon V \rightarrow \{1, \ldots, k\}$; it is {\it proper} if $c(u)\neq c(v)$ for all $u,v$ with $uv\in E$. 
The problem of deciding whether a graph has a proper
$k$-colouring for fixed $k \geq 3$ was an early example of an \NP-complete problem~\cite{Lo73}.
If, however, one knows that a graph has a proper 
$k$-colouring, or several of them,
one may wish to know more about them such as how many there are or what structural properties they have. 

One way to study these questions is to consider the $k$-colouring reconfiguration graph: given a graph $G$, the $k$-colouring reconfiguration graph $R_k(G)$ of $G$ is a graph whose vertices are the proper $k$-colourings of $G$ and where an edge is present between two $k$-colourings if and only if the two $k$-colourings differ on only a single vertex of $G$.  

There are several algorithmic questions one can ask about the graph $R_k(G)$ such as whether $R_k(G)$ is connected, whether there exists a path between two given vertices of $R_k(G)$, or how long is the shortest path between two given vertices of $R_k(G)$. (Note that in general $R_k(G)$ has size exponential in the size of $G$, making these questions highly non-trivial.) It is the latter question, stated formally below, that we address in this paper. 
  
 \medskip
\noindent
\textsc{$k$-Colouring Reconfiguration}

\medskip
\noindent
\begin{tabular}{p{1.7cm}p{10cm}}
\textit{Instance}\,:& An $n$-vertex graph $G=(V,E)$, two 
proper  $k$-colourings $\alpha$ and $\beta$ and a positive integer~$\ell$.\\
\textit{Question}\,:&Is there a path in the reconfiguration graph of $G$ between $\alpha$ and 
$\beta$ of length at most $\ell$? 
\end{tabular}

\subsection{General Motivation}
Reconfiguration graphs can be defined for any search problem: the vertices correspond to all solutions to the problem 
and the edges are defined by a symmetric adjacency relation  normally chosen to represent a smallest possible change between solutions. They arise naturally when one wishes to understand the solution space for a search problem.

There has been much research over the last ten years on the structure and algorithmic aspects of reconfiguration graphs, not only for 
{\sc $k$-Colouring}~\cite{BB13,BJLPP14,BC09,CHJ06,CHJ06a,CHJ06b} 
but also for many other  problems, such as 
{\sc Satisfiability}~\cite{GKMP06},
{\sc Independent Set}~\cite{Bo14,BKW14,KMM12}, 
{\sc List Edge Colouring}~\cite{IKD09,IKZ11},
{\sc $L(2,1)$-Labeling}~\cite{IKOZ12},
{\sc Shortest Path}~\cite{Bo10,Bo13,KMM11}, 
and {\sc Subset Sum}~\cite{ID11}. 
From these studies, the following subtle phenomenon has been observed, which one would like to better understand: it is often (but not always) the case that \NP-complete search problems give rise to PSPACE-complete\footnote{PSPACE-completeness appears to be the default complexity for intractable instances of this kind of problem; see~\cite{IDHPSUU10}.} reconfiguration problems, whereas polynomial-time solvable search problems often give rise to polynomial-time solvable reconfiguration problems. 
For further background we refer the reader to the recent survey of  van den Heuvel~\cite{He13}.

Reconfiguration graphs are also important for constructing and analyzing algorithms that sample or count solutions to a search problem. Indeed, understanding connectivity properties of the $k$-colouring reconfiguration graph is fundamental in  analyzing certain randomized algorithms for sampling and counting $k$-colourings of a graph and in analyzing certain cases of the Glauber dynamics in statistical physics (see Section~5 of \cite{He13}).

\subsection{Our Results}
Our first result, 
which we prove in Section~\ref{s-poly}, 
shows that \textsc{$k$-Colouring Reconfiguration} can be solved in polynomial time when $k=3$, which settles a problem raised by Cereceda, van den Heuvel and Johnson~\cite{CHJ06b}. Note that the cases $k=1,2$ are easily seen to be polynomial-time solvable.

In \cite{CHJ06b}, Cereceda et al.\ were mainly concerned with determining whether, given a graph $G$ on $n$ vertices and $m$ edges, and two 
proper $3$-colourings $\alpha$ and $\beta$, there exists \emph{any} path between $\alpha$ and $\beta$ in $R_k(G)$. They found a polynomial-time algorithm to solve this problem and further showed that, for certain instances, their algorithm in fact finds a shortest path between $\alpha$ and $\beta$ (a precise statement is given in Section~\ref{s-poly}). Here we complete their result by giving an algorithm for all instances.

\begin{theorem}\label{t-ptime}
{\sc $3$-Colouring Reconfiguration} can be solved in time $O(n+m)$. 
\end{theorem}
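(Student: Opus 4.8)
\medskip
\noindent\textit{Sketch of a proof.} The plan is to treat the connected components of $G$ independently: a walk in $R_{3}(G)$ restricts to a walk in each component's reconfiguration graph and, conversely, such walks concatenate, so the answer is ``yes'' exactly when no component is unreachable and the sum of the per-component distances is at most $\ell$. This reduces the problem to a connected graph $G$ with proper $3$-colourings $\alpha,\beta$, for which I must compute the distance in $R_{3}(G)$ (or report that it is infinite). For reachability I use the invariant of Cereceda, van den Heuvel and Johnson: fixing an orientation of $G$, for a proper $3$-colouring $c$ and an edge $e=uv$ let $c^{*}(e)\in\{-1,+1\}$ be congruent to $c(v)-c(u)\pmod 3$; then $\sum_{e\in C}c^{*}(e)$ is divisible by $3$ for every cycle $C$, its third $w_{c}(C)$ (the \emph{weight}) is unchanged by a single recolouring, and $\alpha$ reaches $\beta$ iff $w_{\alpha}=w_{\beta}$ on every cycle --- which, checked on the fundamental cycles of a spanning tree, is an $O(n+m)$ test. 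If it fails, the component is unreachable; otherwise we need the distance.

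The core idea is to use the universal cover of the triangle $C_{3}$, namely the two-way infinite path $P_{\infty}$ on $\mathbb{Z}$, with covering map $j\mapsto j\bmod 3$. Fix a spanning tree $T$ rooted at $r$, set $h(r)=\alpha(r)$, and propagate down $T$ by $h(v)=h(u)\pm1$ with the sign forced by $h(v)\equiv\alpha(v)\pmod 3$; one checks that $|h(u)-h(v)|=1$ on a non-tree edge $uv$ exactly when the fundamental cycle of $uv$ has weight $0$. So when \emph{all} cycle weights vanish, propagation yields a homomorphism $h_{\alpha}\colon G\to P_{\infty}$ lifting $\alpha$, and similarly a lift $h_{\beta}$ of $\beta$. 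A vertex is recolourable in a colouring exactly when its neighbours do not use two colours, i.e.\ exactly when, in any lift, all its neighbours sit one level above it or all one level below; the recolouring then moves that level by precisely $\pm2$, and every such ``extremum flip'' descends to a legal recolouring. Hence ``lift and flip'' is a covering of $\alpha$'s component in $R_{3}(G)$ by the height-function reconfiguration graph of $G$, so the $R_{3}(G)$-distance from $\alpha$ to $\beta$ equals the minimum, over all lifts $\widetilde\beta$ of $\beta$, of the height-function distance from $h_{\alpha}$ to $\widetilde\beta$. Two height functions $f,g$ lie in the same component iff $f(v)-g(v)$ is even for all $v$ (equivalently for one $v$, as $f-g$ is constant mod $2$), and then repeatedly pushing a topmost too-high vertex down and a bottommost too-low vertex up --- each step a valid extremum flip decreasing $\sum_{v}|f(v)-g(v)|$ by exactly $2$ --- shows their distance is exactly $\tfrac12\sum_{v}|f(v)-g(v)|$. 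The lifts of $\beta$ are precisely $h_{\beta}+3t$, $t\in\mathbb{Z}$, of which those in $h_{\alpha}$'s component have $t$ of one fixed parity; minimising $\sum_{v}|h_{\alpha}(v)-h_{\beta}(v)-3t|$ over that progression is a median computation in $O(n)$ time, and half the optimum is the distance.

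It remains to treat the case where some cycle has nonzero weight --- in particular every non-bipartite $G$, and some bipartite ones. Then propagation does not close up, and the plan is to lift instead to the universal cover $\widetilde G$ of $G$ (a tree, hence bipartite), obtain $\pi_{1}(G)$-equivariant height functions on $\widetilde G$ --- the nonzero weights forcing the deck action to translate $P_{\infty}$ nontrivially --- and re-run the height-function analysis under this periodicity constraint. The constraint yields exactly the rigidity already observed by Cereceda et al.: a vertex on a nonzero-weight cycle is recolourable only boundedly often, shortest sequences are essentially monotone, and the distance is again a closed formula; I expect that, after deleting the vertices that are frozen throughout every sequence, one glues finitely many zero-weight instances along the rigid skeleton. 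This last case --- and making both it and the sheet/parity optimisation run in $O(n+m)$ rather than merely polynomial time --- is the step I expect to be the main obstacle; the componentwise reduction, the reachability test, and the linear-time bookkeeping (one DFS for the components and a spanning tree, one pass for $h_{\alpha},h_{\beta}$, one scan for the median) are routine.
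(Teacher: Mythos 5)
Your zero-weight branch is, up to relabelling, the paper's argument: lift $\alpha$ and $\beta$ to height functions on $\mathbb{Z}$, observe that a recolouring is an extremum flip moving a vertex's height by exactly $\pm 2$, conclude that the distance between two lifts in the same parity class is $\tfrac12\sum_v|f(v)-g(v)|$, and minimise over the arithmetic progression of admissible shifts by a median argument. The genuine gap is the nonzero-weight branch, which you flag as ``the main obstacle'' and handle only by a sketch (universal cover of $G$, equivariant heights, gluing zero-weight pieces along a ``rigid skeleton''). This case cannot be deferred: a cycle coloured $123\cdots 123$ has nonzero weight, and exactly these \emph{fixed cycles} are what produce the rigidity you allude to, so the difficult content of the theorem lives precisely there. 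As it stands, neither correctness nor an $O(n+m)$ bound is established for that branch.

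The paper's device for avoiding the case split is to replace the two individual lifts $h_\alpha,h_\beta$ (which may not exist) by a single \emph{relative} lift computed along a fixed spanning tree, $h_{\alpha,u}(\beta,v)=w(\beta,\overrightarrow{P_{uv}})-w(\alpha,\overrightarrow{P_{uv}})$: this is your $h_\beta-h_\alpha$ normalised to vanish at $u$, and it is well-defined regardless of cycle weights because the two lifting obstructions cancel whenever $\alpha$ and $\beta$ have equal cycle weights (Lemmas~\ref{l-fixed-weight-cycles} and~\ref{l-heights-nec}), which is in any case a necessary condition for reachability. The median minimisation and extremum-flip (rising/falling path) construction then go through uniformly; the one adjustment is that the anchor (``focal'') vertex is taken to be a fixed vertex whenever one exists, which forces the shift $k=0$, and a median otherwise. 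To complete your proof you would need to (i) pass to relative lifts as above, (ii) state and verify the further necessary condition $F_{G,\alpha}^i=F_{G,\beta}^i$ for $i=1,2,3$ (Lemma~\ref{l-Gfc}) that your ``frozen vertices'' refer to, since equal cycle weights alone do not suffice (e.g.\ $C_6$ coloured $123123$ versus $231231$), and (iii) show that the set of fixed vertices can be computed in $O(n+m)$ (the paper does this by iterated peeling of vertices having a free colour, Lemma~\ref{l-fixed-time}).
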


For $k \geq 4$, we cannot expect a polynomial-time algorithm for {\sc $k$-Colouring Reconfiguration}: 
Bonsma and Cereceda~\cite{BC09} showed that, for each~$k \geq 4$, the problem of determining if there is \emph{any} path between two given 
proper $k$-colourings of a given graph is PSPACE-complete. On the other hand, our second result (proven in Section~\ref{s-fpt}) is that for each $k \geq 4$, {\sc $k$-Colouring Reconfiguration} is fixed-parameter tractable when parameterized by the path length~$\ell$.

Recall that, informally, a parameterized problem is a decision problem (in our case {\sc $k$-Colouring Reconfiguration}) in which every problem instance~$I$ has an associated integer parameter $p$ (in our case the path length $\ell$). A parameterized problem is \emph{fixed-parameter tractable} (\FPT) if every instance~$I$ can be solved in time $f(p)|I|^c$ where $f$ is a computable function that only depends on~$p$ and $c$ is a constant independent of $p$ (see, for example,~Niedermeier~\cite{Ni06} for an overview).

\begin{theorem} \label{t-fpt}
For each fixed $k \geq 4$, {\sc $k$-Colouring Reconfiguration} can be solved in time $O((k\cdot \ell)^{\ell^2 +\ell} \cdot \ell n^2)$. In particular, for each fixed $k \geq 4$, {\sc $k$-Colouring Reconfiguration} is \FPT\ when parameterized by~$\ell$.
\end{theorem}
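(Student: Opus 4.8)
We want to show that for fixed $k\ge 4$, {\sc $k$-Colouring Reconfiguration} can be solved in time $O((k\ell)^{\ell^2+\ell}\cdot\ell n^2)$. The key observation is that in any recolouring sequence of length at most $\ell$, at most $\ell$ distinct vertices of $G$ ever change colour. So the plan is: first identify the (small) set $S$ of ``active'' vertices that are allowed to move, then argue that, outside a bounded-size neighbourhood of $S$, the colouring is essentially frozen and can be ignored, and finally brute-force over the polynomially-or-FPT-many candidates for $S$ together with the recolouring sequences on the resulting bounded-size instance.

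More concretely, I would proceed as follows. Since $\alpha$ and $\beta$ must differ only on vertices that move, any yes-instance has $\alpha$ and $\beta$ agreeing outside a set $D$ with $|D|\le\ell$; we may assume $D\neq\emptyset$ (else check $\alpha=\beta$ and $\ell\ge 0$). Pick any vertex $v_0\in D$. Now here is the crucial structural claim to establish: the set $S$ of vertices that change at least once during a shortest valid sequence is ``connected'' to $v_0$ through short chains — more precisely, one shows that if a vertex $u$ is recoloured, then some neighbour of $u$ was recoloured earlier, \emph{unless} $u$ started with a colour that was already ``blocked'' in a way forced by the endpoints; iterating this, every moved vertex lies within distance $\le\ell$ in $G$ of $D$. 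Thus $S$ is contained in the ball $B$ of radius $\ell$ around $D$ in $G$. But $B$ may still be large (high degree), so the second reduction is needed: a vertex $w$ in $B\setminus S$ never moves, hence acts only as a colour constraint on its neighbours, and one shows that for the reconfiguration on $S$ only the \emph{multiset of colours on $N(w)\cap S$} relative to $w$ matters — so we can contract/replace the frozen part by at most $k$ ``gadget'' vertices per active vertex, bounding the relevant instance size by a function of $k$ and $\ell$ (roughly $O(k\ell)$ vertices). Then we enumerate all sequences of at most $\ell$ moves on this bounded instance: each move picks one of $O(k\ell)$ vertices and one of $k$ colours, giving $(k\cdot O(k\ell))^{\ell}$ sequences, and we add the outer loop over the choice of which $\le\ell$ vertices form $S$ (anchored at $v_0$, found by BFS from $D$), which costs another factor absorbed into $(k\ell)^{\ell^2}$; the $\ell n^2$ factor covers, for each guess, building the reduced instance and checking validity of each intermediate colouring.

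The main obstacle I expect is the structural claim that all moved vertices lie close to $D$ in $G$ — equivalently, that a shortest reconfiguration sequence cannot ``reach out'' far into the graph to recolour a vertex that is nowhere near a vertex where $\alpha$ and $\beta$ differ. The naive intuition (``why would you move a vertex you don't have to?'') is true but needs a careful exchange/shortcut argument: one has to show that any move of a vertex $u$ all of whose previously-moved neighbours are ``far'' can be deleted or deferred without lengthening the sequence, using that $k\ge 3$ so that a free colour is always available and that deleting $u$'s move and its eventual restoration never violates properness for the intermediate colourings. Making this exchange argument airtight — handling the case where $u$ is moved several times, and ensuring the surgery leaves a valid sequence of no greater length ending at $\beta$ — is the delicate part; everything after that (the frozen-neighbour gadget reduction and the brute-force enumeration with the stated running time) is routine bookkeeping.
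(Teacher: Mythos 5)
Your high-level plan — confine all moves to a small ``active'' set and brute-force over move sequences — is the right shape, and the first half of your reasoning (only $\le\ell$ vertices can move, and each moved vertex owes its move to a chain of moved neighbours reaching back to a vertex of $D$) is sound. But there is a genuine gap in the middle, and it is precisely the part the paper's argument is designed to close. Your conclusion that ``every moved vertex lies within distance $\le\ell$ of $D$'' gives you the ball $B$ of radius $\ell$ around $D$, but in a high-degree graph $B$ can have size polynomial (or worse) in $n$, so it does not identify an FPT-sized set of candidate vertices. You acknowledge this and propose a gadget contraction of the frozen neighbourhood plus an outer loop over the choice of the actual moved set $S\subseteq B$. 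That outer loop is exactly where the argument breaks: you assert it ``costs another factor absorbed into $(k\ell)^{\ell^2}$,'' but you have no bound on the number of candidate $S$'s — there could be $\binom{n}{\ell}$ connected $\ell$-subsets of $B$ containing a vertex of $D$, which is $\mathsf{XP}$, not $\FPT$. The gadget reduction is applied \emph{after} fixing $S$, so it cannot rescue the enumeration.

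The missing idea is a filter on which neighbours may be candidates, and it is not a distance filter but a \emph{colour-class-size} filter. Define $N(u,v)=\{w\in N(u):\alpha(w)=\alpha(v)\}$. The paper's Lemma~\ref{lemma-R} shows that if a recoloured vertex $u$ causes (via a minimality/exchange argument) a neighbour $v$ to be recoloured, then necessarily $|N(u,v)|\le\ell$: for if $|N(u,v)|>\ell$, then at the step where $u$ first takes colour $\alpha(v)$, \emph{all} of $N(u,v)$ must already differ from $\alpha$, and Lemma~\ref{lemma-wq} (at step $q$ at most $q\le\ell$ vertices differ from $\alpha$) rules this out. Running a BFS from $A_0=D$ that only follows edges into colour classes of size $\le\ell$ yields a set $A^*$ with $|A^*|\le\ell(k\ell)^\ell$ (Lemma~\ref{lemma-sp}), and then the single brute-force over sequences of $(\text{vertex in }A^*,\text{colour})$ pairs gives the stated running time directly, with no need for a separate guess of $S$ or a gadget contraction. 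Your exchange-argument instinct for \emph{why} a far-away vertex need not move is exactly the engine of the paper's Lemma~\ref{lemma-R} proof; what you are missing is that the exchange argument delivers a degree bound of $k\ell$ per step, not merely a radius bound of $\ell$.
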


Once a problem is shown to be \FPT\ (and it is unlikely that the problem is polynomial-time solvable), one can go further and ask whether it has a {\it polynomial kernel}.  It is well known~\cite{Ni06} that a problem is \FPT\ with respect to a parameter~$p$ if and only if it can be {\it kernelized}, i.e., if and only if, for any instance $(I, p)$ of the given parameterized problem, it is possible to compute in polynomial time an {\it equivalent instance} $(I', p')$ such that $|I'|,p'\leq g(p)$ for some computable 
function~$g$ (two problem instances are equivalent if and only if they are both yes-instances or both no-instances). If $g(p)$ is a polynomial, then the given parameterized problem is said to have a polynomial kernel. We prove the following theorem in Section~\ref{s-nokernel}.
\begin{theorem} \label{t-ker}
For each fixed $k\geq 4$,
$k$-\textsc{Colouring Reconfiguration} 
parameterized by $\ell$
does not admit a polynomial kernel  
unless $\mathsf{NP\subseteq coNP/poly}$.
\end{theorem}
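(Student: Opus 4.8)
\medskip
\noindent\emph{Proof strategy.}
The plan is to rule out a polynomial kernel by an \emph{OR-cross-composition}, in the sense of Bodlaender, Jansen and Kratsch: it suffices to exhibit, for some \NP-hard language $L$ and a suitable polynomial equivalence relation on its instances, a polynomial-time algorithm that takes $t$ equivalent instances $x_1,\dots,x_t$ of $L$ and produces a single instance $(G,\alpha,\beta,\ell)$ of $k$-\textsc{Colouring Reconfiguration} with $\ell$ bounded by a polynomial in $(\max_i |x_i|)+\log t$, such that the output is a yes-instance if and only if some $x_i$ is a yes-instance. For $L$ one may take any convenient \NP-complete problem; to keep the colouring gadgets transparent I would use \textsc{Satisfiability}, calling two formulas equivalent when they have the same number of variables $n$ and clauses $m$, so that we are handed $\varphi_1,\dots,\varphi_t$ each on $n$ variables and $m$ clauses (this is a valid polynomial equivalence relation, giving at most $\mathrm{poly}$-many classes).

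First I would build a \emph{formula gadget} $H(\varphi)$: a graph on the colour set $\{1,\dots,k\}$, with a distinguished output vertex, carrying two proper colourings ``clean'' and ``solved'' that differ on a set of size $\mathrm{poly}(n+m)$, such that (i) one can reconfigure between clean and solved in either direction using only vertices of $H(\varphi)$ and only $\mathrm{poly}(n+m)$ recolourings exactly when $\varphi$ is satisfiable, and (ii) once one extra ``locking'' neighbour is imposed on the output vertex, no vertex of $H(\varphi)$ can move at all. This is a routine assembly of standard colouring-reconfiguration components: to each would-be boolean vertex attach a private copy of $K_k$ (whose vertices stay frozen throughout any reconfiguration sequence) and join it to the clique vertices carrying colours $3,\dots,k$, confining it to the two colours $1,2$ --- this is exactly where $k\ge4$ is needed, to leave enough free colours; add clause-checking structure; and wire a signal to the output vertex that is released only when every clause is satisfied.

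Next I would assemble $G$ as the disjoint union of $H(\varphi_1),\dots,H(\varphi_t)$ together with a ``selector'' gadget $S$ of size $O(\log t)$ having a neutral state, such that switching $S$ between neutral and ``index $i$ active'' and back costs $O(\log t)$ recolourings, and wired so that $H(\varphi_i)$ is unlocked precisely while $S$ is in state ``index $i$ active'' (all other gadgets being locked, hence frozen). Add one ``hub'' vertex $z$, joined to all output vertices and to a private $K_k$ that confines it to three colours, so that in the global clean state $z$ is frozen but becomes free as soon as some $H(\varphi_i)$ reaches its solved state (which changes the colour of its output vertex). Let $\alpha$ be the state with $S$ neutral, every $H(\varphi_j)$ clean and $z$ at colour $a$; let $\beta$ be identical except $z$ is at colour $b$; and set $\ell:=c\,(\mathrm{poly}(n+m)+\log t)$ for a suitable constant $c$. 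For correctness: if some $\varphi_i$ is satisfiable, switch $S$ to index $i$, drive $H(\varphi_i)$ from clean to solved, recolour $z$ from $a$ to $b$, then undo everything, all within $\ell$ steps; conversely, any sequence of length at most $\ell$ from $\alpha$ to $\beta$ must recolour $z$ at some step, and at that moment --- since the unselected gadgets are frozen and the selected one cannot be pushed to its solved state within the budget except via an honest satisfying assignment --- some $\varphi_i$ is certified satisfiable. As $\ell$ is polynomial in $(\max_i|\varphi_i|)+\log t$, this is the required cross-composition, and Theorem~\ref{t-ker} follows.

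The hard part will be the freezing bookkeeping, which must be maintained not only at $\alpha$ and $\beta$ but at \emph{every} intermediate colouring: one needs that while $S$ is not set to $i$ literally no vertex of $H(\varphi_i)$ can move (so the budget cannot be wasted on useless progress), that unlocking and relocking a gadget through $S$ is cheap and does not by itself liberate $z$, and that $z$ can be freed only by a genuinely solved gadget and not by a half-recoloured one --- all while every colour class stays proper and the palette is kept to exactly $k$ colours. Everything else is counting. (An alternative design dispenses with $S$ by giving $z$ its own independent escape gadget for each $\varphi_i$, but then one must instead rule out usefully combining partial progress across several gadgets, which again reduces to the same freezing analysis.)
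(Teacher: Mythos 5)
Your strategy---an OR-cross-composition from \textsc{Satisfiability}---is a legitimate route in principle, but it is genuinely different from the paper, which instead gives a polynomial parameter transformation (PPT) from \textsc{Hitting Set} parameterized by the number of sets $|\F|$, using the known compression lower bound for that problem due to Dom, Lokshtanov and Saurabh. That choice is not cosmetic: because the source problem already carries the compression barrier, the paper needs only a single-instance Karp reduction with a polynomially bounded target parameter, and it achieves this with an explicit depth-$\log|U|$ binary tree of claw gadgets, so that $\ell=O(m\log|U|)=O(m^2)$. A cross-composition from SAT, by contrast, must engineer the ORing of $t$ instances from scratch, which is strictly more machinery.

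The gap in your proposal is that its central object is asserted rather than constructed. You posit a formula gadget $H(\varphi)$ with clean and solved colourings such that a clean-to-solved reconfiguration of length $\mathrm{poly}(n+m)$ exists exactly when $\varphi$ is satisfiable, and such that adding one locking neighbour to the output vertex freezes \emph{every} vertex of the gadget. You call this ``a routine assembly of standard colouring-reconfiguration components'', but no such components are standard: you need simultaneously that (a) fixing one vertex propagates a freeze through the whole gadget (so that an inactive gadget can neither soak up budget nor leak partial progress), and (b) releasing that same vertex re-enables a recolouring whose feasibility within budget certifies satisfiability and nothing weaker. The paper's construction spends almost all of its effort establishing exactly analogous freeze/unfreeze semantics for its tree-of-claws gadget, with an explicit case analysis. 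You also acknowledge, but do not resolve, the danger that the selector could be toggled among indices to combine partial progress across gadgets; this, too, requires a concrete argument about the budget and about which vertices are fixed at every intermediate colouring, not just at $\alpha$ and $\beta$. Without an explicit gadget and an explicit soundness analysis of the kind the paper provides, the theorem is not established.
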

In fact Theorem~\ref{t-ker} holds even when we restrict attention to 
inputs where the two proper $k$-colourings of the input graph differ in only two vertices (note that the problem becomes trivial if the two given $k$-colourings differ in only one vertex). 

Our three results give a {\it complete} picture of the parameterized complexity of {\sc $k$-Colouring Reconfiguration} for each fixed $k$ when parameterized by $\ell$.

\subsection{Related work}
Fixed-parameter tractability of \textsc{$k$-Colouring Reconfiguration} was proved independently in recent work of Bonsma and Mouawad~\cite{BM14}  (their algorithm is different from ours). They also prove various hardness results for other parameterizations of {\sc $k$-Co\-lou\-ring Reconfiguration}.  In particular, they proved that if $k$ is part of the input then {\sc $k$-Colouring Reconfiguration} is \W[1]-hard when parameterized only by~$\ell$ (note that the problem, when parameterized only by~$k$, is para-PSPACE-complete due to the aforementioned result of Bonsma and Cereceda~\cite{BC09}).

Mouawad, Nishimura, Raman, Simjour and Suzuki~\cite{MNRSS13} were the first to consider reconfiguration problems in the context of parameterized complexity. For various \NP-complete search problems, they showed that determining whether there exists a path of length at most $\ell$ in the reconfiguration graph between two given vertices is \W[1]-hard (when $\ell$ is the parameter); they asked if there exists an \NP-complete problem for which the corresponding reconfiguration problem, parameterized by $\ell$, is \FPT.  Theorem~\ref{t-fpt} and~\cite{BM14} give the second positive answer to this question, the first being an \FPT\ algorithm for a reconfiguration problem related to {\sc Vertex Cover} \cite{MNR14}.  However, perhaps surprisingly, Theorem~\ref{t-ptime} shows that there even exists an \NP-complete problem for which the corresponding shortest path problem in the reconfiguration graph is polynomial-time solvable, and thus trivially \FPT\ when parameterized by~$\ell$. 

As mentioned earlier, deciding whether there exists \emph{any} path in $R_k(G)$ between two $k$-colourings $\alpha$ and $\beta$ of an input graph $G$ is polynomial-time solvable for $k \leq 3$ \cite{CHJ06b} and PSPACE-complete for $k \geq 4$ \cite{BC09}. The problem remains PSPACE-complete for bipartite graphs when $k \geq 4$, for planar graphs when $4\leq k \leq 6$ and for planar bipartite graphs for $k=4$~\cite{BC09}.

The algorithmic question of whether $R_k(G)$ is connected for a given $G$ is addressed in \cite{CHJ06,CHJ06a}, where it is shown that the problem is coNP-complete for $k=3$ and bipartite $G$, but polynomial-time solvable for planar bipartite $G$.

Finally, the study of the diameter of $R_k(G)$ raises interesting questions. In~\cite{CHJ06b} it is shown that every component of $R_3(G)$ has diameter polynomial (in fact quadratic) in the size of $G$. On the other hand, for $k \geq 4$, explicit constructions~\cite{BC09} are given of graphs  $G$ for which $R_k(G)$ has at least one component with diameter exponential in the size of $G$. It is known that if $G$ is a $(k-2)$-degenerate graph then $R_k(G)$ is connected and it is conjectured that in this case $R_k(G)$ has diameter polynomial in the size of $G$ \cite{CHJ06}; for graphs of treewidth $k-2$ the conjecture has been proved in the 
affirmative~\cite{BB13}. 

\subsection{Definitions and Terminology}

Here are some definitions needed throughout the paper. Let $G=(V,E)$ be a graph on $n$ vertices and $m$ edges.  For \emph{any} two colourings  $c$ and $d$, we say that $c$ and $d$ \emph{agree} on a vertex $u$ if $c(u)=d(u)$ and that otherwise they \emph{disagree} on $u$.  A \emph{$(c_0\!\rightarrow\!\!c_\ell)$-recolouring} of $G$ of {\em length} $\ell$ is a sequence $R = c_0, \dots , c_{\ell}$ of proper colourings of $G$,  where for $1 \leq q \leq \ell$, $c_q$ and $c_{q-1}$ disagree on at most one vertex. So possibly $c_q=c_{q-1}$ though in this case $c_q$ could be deleted and the sequence that remained would also be an $(c_0\!\rightarrow\!\!c_\ell)$-recolouring (of length $\ell - 1$).  The set $\{c_{q-1}c_q : c_{q-1} \neq c_q\}$ is a set of edges in the reconfiguration graph corresponding to a walk from $c_0$ to~$c_\ell$.  An edge of the reconfiguration graph corresponds to the \emph{recolouring} of a single vertex, and we say that it is possible to recolour a vertex $v$, with respect to a colouring $c$, if there is a colour other than $c(v)$ that does not appear on any neighbour of $v$.

\section{A Polynomial-Time Algorithm for $k=3$}\label{s-poly}

We consider {\sc 3-Colouring Reconfiguration}.
Throughout this section $G=(V,E)$ is a 
3-colourable graph 
with $n$ vertices and $m$ edges, $\alpha$ and $\beta$ are two of its 3-colourings, and $\ell$ is a positive integer.
We will prove Theorem~\ref{t-ptime} by showing that we can decide in time 
$O(n+m)$ whether there is a path in $R_3(G)$ from $\alpha$ to $\beta$ of length at most $\ell$.    Cereceda et al.~\cite{CHJ06b} provided a partial solution to {\sc 3-Colouring Reconfiguration}.  They introduced a polynomial-time algorithm that determines whether or not $\alpha$ and $\beta$ belong to the same component of $R_3(G)$, and, when they are connected, finds a path between them of  length $O(n^2)$.  They noted that in some, but not all, cases the path found by their algorithm is shortest possible.  Our approach here extends the techniques they introduced.

We describe the section in outline.  We will introduce two features of 3-coloured graphs: the set of fixed vertices and a height function for the vertex set.  In Lemmas~\ref{l-Gfc} and~\ref{l-heights-nec}, we will prove necessary conditions, in terms of fixed vertices and vertex heights, for $\alpha$ and $\beta$ to belong to the same component of the reconfiguration graph, and, in Lemmas~\ref{l-fixed-time} and~\ref{l-heights-time} show that these conditions can be checked in time 
$O(n+m)$.  We will then show that these conditions are sufficient by describing an algorithm that finds a path between $\alpha$ and $\beta$ whenever the conditions are satisfied.  Furthermore in Lemma~\ref{l-k-bound} we will give a lower bound on the length of the shortest path from $\alpha$ to $\beta$ in terms of vertex heights, and we will show that the path found by our algorithm achieves that bound.  
In Lemma~\ref{l-k-time}, we show that the lower bound can be computed in time $O(n+m)$, and so this is sufficient time to decide whether or not there is a path from $\alpha$ to $\beta$ of length at most $\ell$, proving Theorem~\ref{t-ptime}.

\subsection{Fixed Vertices}
Let $G=(V,E)$ be a graph and let $c$ be a $3$-colouring of $G$ with colours $1$, $2$, and $3$.  
As in \cite[Claim~8]{CHJ06b}, we define a vertex $v$ to be a \emph{fixed vertex of $G$ (with respect to $c$)} if there is no sequence of recolourings from $c$ that will allow us to recolour $v$, i.e.\ for every $3$-colouring $c'$ that is in the same component of $R_3(G)$ as $c$, we have $c(v) = c'(v)$. For example, if a cycle with $0 \bmod 3$ vertices is coloured $123123 \cdots 123$ by $c$, then every vertex on the cycle is fixed (as none can
be the first to be recoloured): we call such a cycle a \emph{fixed cycle} (as a subgraph of $G$, and with respect to the $3$-colouring $c$). The set of fixed vertices of $G$ with respect to $c$ is denoted $F_{G,c}$ and $F_{G,c}^i \subseteq F_{G,c}$ denotes the set of fixed vertices coloured $i$.

The next lemma follows immediately from the definition of fixed vertices, but we state it formally so that we can refer to it later.  
\begin{lemma} \label{l-Gfc} 
Let $G$ be a graph and let $\alpha$ and $\beta$ be two $3$-colourings of $G$.  If $\alpha$ and $\beta$ belong to the same component of $R_3(G)$, then $F_{G,\alpha}^i=F_{G,\beta}^i$ for $i=1,2,3$. 
\end{lemma}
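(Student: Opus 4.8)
The plan is to prove Lemma~\ref{l-Gfc} directly from the definition of a fixed vertex, essentially observing that the set of fixed vertices (together with their colours) is an invariant of the connected component of $R_3(G)$. First I would recall what it means for $\alpha$ and $\beta$ to lie in the same component of $R_3(G)$: there is a $(\alpha \rightarrow \beta)$-recolouring $c_0 = \alpha, c_1, \dots, c_t = \beta$, so in particular $\beta$ is one of the proper $3$-colourings reachable from $\alpha$ by a sequence of single-vertex recolourings, and by symmetry $\alpha$ is reachable from $\beta$.

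Next I would argue the inclusion $F_{G,\alpha} \subseteq F_{G,\beta}$ with matching colours. Take $v \in F_{G,\alpha}^i$, so $\alpha(v) = i$ and $v$ is fixed with respect to $\alpha$. Since $\alpha$ and $\beta$ are in the same component, $\beta$ is one of the colourings $c'$ in the same component as $\alpha$; by the definition of a fixed vertex, $\alpha(v) = c'(v)$ for every such $c'$, hence $\beta(v) = \alpha(v) = i$. It remains to check that $v$ is also fixed with respect to $\beta$: if $c''$ is any colouring in the same component of $R_3(G)$ as $\beta$, then since $\alpha$ and $\beta$ are in the same component, $c''$ is also in the same component as $\alpha$, so again $c''(v) = \alpha(v) = i = \beta(v)$. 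Thus $v \in F_{G,\beta}^i$. This shows $F_{G,\alpha}^i \subseteq F_{G,\beta}^i$ for each $i$.

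Finally, the reverse inclusion $F_{G,\beta}^i \subseteq F_{G,\alpha}^i$ follows by the identical argument with the roles of $\alpha$ and $\beta$ interchanged, using that ``being in the same component of $R_3(G)$'' is a symmetric (indeed equivalence) relation on proper $3$-colourings. Combining the two inclusions gives $F_{G,\alpha}^i = F_{G,\beta}^i$ for $i = 1,2,3$, as required.

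There is essentially no obstacle here: the entire content is unwinding the definition of a fixed vertex and using that the component relation is an equivalence relation, which is exactly why the excerpt remarks that the lemma ``follows immediately from the definition.'' The only point requiring a word of care is not merely that $\alpha$ and $\beta$ assign $v$ the same colour, but that $v$ is fixed with respect to \emph{both} colourings — and this is immediate once one notes that $\alpha$ and $\beta$ have the same component, hence the same set of colourings $c'$ quantified over in the definition of ``fixed''.
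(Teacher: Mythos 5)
Your argument is correct and is exactly the unwinding of the definition that the paper has in mind; the paper itself gives no proof, stating only that the lemma ``follows immediately from the definition of fixed vertices.'' You correctly identify the one small point worth making explicit, namely that $v$ must be fixed with respect to $\beta$ and not merely receive the same colour, and you handle it by observing that ``same component'' is an equivalence relation.
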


\begin{lemma} \label{l-fixed-time}
Let $G=(V,E)$ be a graph and let $c$ be a $3$-colouring of $G$.
The set of fixed vertices $F_{G,c}$ can be found in time 
$O(n+m)$.
\end{lemma}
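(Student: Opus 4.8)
The plan is to characterize fixed vertices combinatorially and then detect them by a near-linear-time search. The key structural fact is the one illustrated by the example in the text: a vertex $v$ is fixed with respect to $c$ precisely when it lies in some subgraph whose colouring leaves no vertex free to move. Concretely, I would first observe that $v$ can be recoloured directly (in one step) if and only if at most two of the three colours appear on $N(v)$; call such a vertex \emph{immediately free}. Then I would argue that $v$ is \emph{non}-fixed if and only if $v$ is connected, via a path of currently-immediately-free vertices in a suitable auxiliary sense, to some vertex we are allowed to move — more carefully, I would set up the following iterative procedure: start with the set $S_0$ of immediately free vertices; then repeatedly add to the set any vertex $v$ all of whose neighbours outside the current set still leave a free colour for $v$ (equivalently, recolouring vertices already in the set can create a free colour at $v$). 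Let $S$ be the fixed point of this process. The claim to be proved is that $V \setminus S = F_{G,c}$.

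The two directions are as follows. For $V\setminus S \subseteq F_{G,c}$: I would show by induction on the length of a recolouring sequence that no vertex of $V\setminus S$ is ever recoloured. If $c_0=c$ and $c_0,\dots,c_t$ is a recolouring, the first vertex recoloured must be immediately free with respect to $c$, hence lies in $S_0\subseteq S$; inductively, before the step at which some vertex $v\notin S$ would first be recoloured, only vertices of $S$ have changed colour, so the colours seen by $v$ are those of its neighbours under a colouring differing from $c$ only on $S$ — but the defining closure condition of $S$ says exactly that this cannot free $v$. For the converse $F_{G,c}\subseteq V\setminus S$, i.e.\ every vertex in $S$ is non-fixed, I would proceed by induction on the round in which a vertex enters $S$: vertices in $S_0$ are non-fixed by definition; for a vertex $v$ entering at round $r+1$ because recolouring some already-added neighbours frees a colour at $v$, I would first (using the induction hypothesis and a standard ``commute independent recolourings'' argument) recolour those neighbours to the required colours, then recolour $v$.

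For the running time, the point is that the closure process is monotone and can be implemented like a BFS/peeling algorithm: maintain for each vertex $v$ the multiset (or just the count of distinct colours) appearing on its neighbours \emph{outside} the current set $S$; a vertex becomes eligible to enter $S$ as soon as this count drops to at most $2$, and each time a vertex is added we update the counts of its neighbours in $O(1)$ amortized time each. Since each vertex is added at most once and each edge is processed a constant number of times, the total time is $O(n+m)$. I would note that one must be slightly careful about what ``the colour of an in-$S$ neighbour'' contributes: the safe and correct bookkeeping is to track only the colours contributed by \emph{out-of-$S$} neighbours, since in-$S$ neighbours can (by the inductive recolouring argument) be pushed to any colour not currently blocking them; a short lemma confirms this does not over-count.

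The main obstacle is the converse direction: turning ``recolouring the in-$S$ neighbours of $v$ frees a colour'' into an actual legal recolouring sequence of $G$. The subtlety is that the recolouring moves that witness non-fixedness of different neighbours of $v$ may interfere with one another, and with the rest of the graph. I expect this to be handled by the standard observation that recolourings supported on vertex sets whose closed neighbourhoods we can keep disjoint (or that we can sequence one at a time, each time ``parking'' previously-moved vertices at colours that do not obstruct the next move) commute, together with an induction that always works with the lowest-round vertex first so that the witnessing sequences for its neighbours are available by the induction hypothesis. Everything else — the $O(n+m)$ implementation and the easy forward direction — is routine bookkeeping once the characterization $F_{G,c}=V\setminus S$ is established.
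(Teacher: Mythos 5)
Your peeling procedure is essentially the one in the paper: maintain a set of "still possibly fixed" vertices, and repeatedly remove a vertex as soon as some colour distinct from its own no longer appears among its neighbours that are still in the set. (Minor slip: you state "immediately free" as "at most two of the three colours appear on $N(v)$," but properness of $c$ already guarantees this for every vertex; the correct condition, which your closure rule does use, is that at most one colour appears on $N(v)$, equivalently some colour other than $c(v)$ is absent.) Your argument that the terminal set consists of fixed vertices — induct on a recolouring sequence, the first recoloured vertex must already be removable — also matches the paper, and the $O(n+m)$ implementation is the same bucket/peeling bookkeeping.

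Where your proposal has a real gap is the converse direction, that every removed vertex is actually non-fixed — and you yourself flag this as "the main obstacle." Your plan is to stitch together the witnessing recolouring sequences of a vertex's already-removed neighbours by a commutativity or disjoint-neighbourhood argument and careful round-by-round ordering. But those witnessing sequences can overlap badly: the sequences for two in-$S$ neighbours of $v$ may touch a common vertex, force it to incompatible colours, or recolour one of $v$'s other neighbours back to the colour that was supposed to become free, and "parking" previously moved vertices has no obvious invariant to guarantee termination or correctness. The paper avoids this entirely with a single global recolouring built from a modular invariant: for each removed vertex $v$ record the colour $d(v)$ that was free at its removal, and to liberate a target $u$ simply recolour, in removal order, every removed vertex $v$ (up to and including $u$) with $d(v)-c(v)\equiv d(u)-c(u)\pmod 3$ to colour $d(v)$. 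The arithmetic guarantees that any neighbour $w$ of $v$ with $c(w)=d(v)$ was removed earlier and satisfies the same residue, so it has already been moved off $d(v)$, while any neighbour with $d(w)=d(v)$ lies in the other residue class and is never touched. That single observation is the technical heart of the proof, and it is what your sketch is missing; without it, the "just compose the witnesses" step does not go through.
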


\begin{proof}
We present an algorithm to find $F_{G,c}$.  

Let $S$ be set initially to be $V$.  The algorithm will delete vertices from $S$ and we will claim that when the algorithm terminates it is equal to $F_{G,c}$.   For $i=1,2,3$, for each $v \in V$, let $n_i(v)$ be $-1$ if $c(v)=i$ or otherwise the number of neighbours of $v$ in $G[S]$ coloured $i$.  Mark each vertex $v$, for which $n_i(v)=0$ for some $i$, as \emph{waiting}.   This initialization can be done in time $O(n+m)$.

The algorithm repeatedly chooses a waiting vertex $v$ (at which point it is \emph{processed} and no longer waiting).  Then $v$ is deleted from $S$, and so, for each of its neighbours $w$, $n_i(w)$ is decremented by 1, where $i=c(v)$.  If $n_i(w)=0$, and $w$ is not processed, it is marked as waiting.  The algorithm terminates when no vertex is waiting.  Within all the executions of this loop, each edge is considered at most twice so again time $O(n+m)$ suffices.

Consider $S$ when the algorithm terminates.
Each vertex $v$ in $S$  has not been processed and so $n_i(v) \neq 0$ for each $i=1,2,3$.  Thus, for this terminal $S$, every vertex in $G[S]$ has at least one neighbour of each of the two colours distinct from $c(v)$.  This implies that these vertices are fixed since none can be recoloured until at least one of the others has been recoloured.

For each vertex $v$ that is deleted from $S$,  let $d(v)$ be a colour $i$ such that $n_i(v)=0$ when $v$ is deleted. Note that this implies $d(v) \neq c(v)$ since $n_{c(v)}(v)=-1$.

To complete the proof, we show that a vertex $u$ that has been deleted from $S$ can be recoloured.  Consider the vertices in the order of their deletion from $S$ up to and including $u$:  for each vertex $v$ with $d(v) - c(v) \equiv d(u) - c(u) \bmod 3$, recolour $v$ with $d(v)$.  Thus $u$ is ultimately recoloured and we need only show that when a vertex $v$ is recoloured with $d(v)$ each neighbour $w$ with $c(w)=d(v)$ has already been recoloured and each neighbour with $d(w)=d(v)$ has not been recoloured.   Without loss of generality, suppose that $c(v)=1$ and that $d(v)=2$.  Using the definition of $d(v)$ twice, any neighbour $w$ of $v$ with $c(w)=2$ must have been deleted from $S$ before $v$, and $d(w)$ must be $3$.  So $w$ will be recoloured from 2 to 3 before $v$ is recoloured from $1$ to $2$ since $d(v)-c(v) \equiv d(w)-c(w)$.  If a neighbour $w$ of $v$ has $d(w)=2$, then $c(w)=3$ (else $c(v)=c(w)$) and so $w$ will not be recoloured as  $d(v) - c(v) \not \equiv d(w) - c(w) \bmod 3$.
\end{proof}

\subsection{Vertex Heights}

In this subsection we define the height of a vertex with respect to a $3$-colouring and prove some properties of this function. The notion of height is similar to that in \cite{CHJ06b}.
Before heights, we have weights. Let $G$ be a graph and $c$ a $3$-colouring of~$G$.  The \emph{weight} of an edge oriented from~$u$ to~$v$ is a value $w(c,\overrightarrow{uv}) \in \{-1, 1\}$ such that $w(c,\overrightarrow{uv}) \equiv c(v) - c(u) \bmod 3$.  That is, we think of the colours cyclically: the weight of the edge reflects whether the colour is being increased or decreased $\bmod 3$ as the oriented edge is traversed.
Note that we always have $w(c,\overrightarrow{uv}) = -w(c,\overrightarrow{vu})$.

To orient a path $P$ or cycle $C$ of $G$ is to orient each edge so that a directed path $\overrightarrow{P}$  or cycle $\overrightarrow{C}$ is obtained.  The weight of an oriented path $w(c,\overrightarrow{P})$ or an oriented cycle $w(c,\overrightarrow{C})$ is the sum of the weights of its edges.   The following lemma is from~\cite{CHJ06,CHJ06a}.

\begin{lemma} \label{l-fixed-weight-cycles}
Let $G$ be a graph and let $c$ and $d$ be two $3$-colourings of $G$.  If $c$ and $d$ belong to the same component of $R_3(G)$, then, for every cycle $C$ of $G$,   $w(c,\overrightarrow{C}) = w(d,\overrightarrow{C})$.
\end{lemma}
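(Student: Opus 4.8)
The plan is to reduce to the case of a single recolouring step and then analyse how the weight of an oriented cycle changes. Since $c$ and $d$ lie in the same component of $R_3(G)$, there is a sequence of proper $3$-colourings $c = c_0, c_1, \dots, c_t = d$ in which consecutive colourings differ on exactly one vertex. As equality of cycle weights is transitive, it suffices to prove the claim when $d$ is obtained from $c$ by recolouring a single vertex $v$.

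Fix a cycle $C$ of $G$ together with an orientation $\overrightarrow{C}$. If $v$ is not a vertex of $C$, then every edge of $C$ joins two vertices on which $c$ and $d$ agree, so each edge has the same weight under $c$ and under $d$, giving $w(c,\overrightarrow{C}) = w(d,\overrightarrow{C})$ immediately. So assume $v$ lies on $C$, and let $u$ and $x$ be its two neighbours on $C$, named so that $\overrightarrow{C}$ contains the arcs $\overrightarrow{uv}$ and $\overrightarrow{vx}$. These are the only two edges of $C$ incident with $v$, and every other edge of $C$ has unchanged endpoints, so
\[
w(d,\overrightarrow{C}) - w(c,\overrightarrow{C}) = \big(w(d,\overrightarrow{uv}) + w(d,\overrightarrow{vx})\big) - \big(w(c,\overrightarrow{uv}) + w(c,\overrightarrow{vx})\big).
\]

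It remains to show that the two bracketed sums are equal. By definition of weight, $w(c,\overrightarrow{uv}) + w(c,\overrightarrow{vx}) \equiv (c(v) - c(u)) + (c(x) - c(v)) = c(x) - c(u) \bmod 3$, and likewise $w(d,\overrightarrow{uv}) + w(d,\overrightarrow{vx}) \equiv d(x) - d(u) \bmod 3$; since $c$ and $d$ agree on $u$ and on $x$, these two residues are equal. The key observation is now that each bracketed sum is a sum of two values from $\{-1,1\}$, hence lies in $\{-2, 0, 2\}$, and the three elements of $\{-2,0,2\}$ are pairwise incongruent modulo $3$. Two elements of $\{-2,0,2\}$ that are congruent modulo $3$ must therefore be equal, so the two bracketed sums coincide and $w(c,\overrightarrow{C}) = w(d,\overrightarrow{C})$.

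The only mild subtlety — and the step most worth stating carefully — is this last point: it is not enough that the cycle weights agree modulo $3$ (which is nearly immediate); one needs the range restriction coming from $w \in \{-1,1\}$ to upgrade congruence to equality at a single recolouring step. Everything else is routine bookkeeping: that exactly two edges of $C$ change weight when one vertex is recoloured, transitivity along the recolouring sequence, and the fact that $u \neq x$ since $C$ has length at least three, so the two incident arcs are genuinely distinct.
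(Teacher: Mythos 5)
Your proof is correct. One thing worth pointing out immediately: the paper does not actually prove this lemma --- it is cited as a result from earlier work of Cereceda, van den Heuvel and Johnson (\cite{CHJ06,CHJ06a}), so there is no in-paper proof to compare against. Your argument is a clean, self-contained derivation of the fact.

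Your route --- reduce to one recolouring step, observe that only the two arcs of $C$ at the recoloured vertex $v$ can change, compute that the sum of these two arc weights is invariant modulo $3$ because $c$ and $d$ agree on the neighbours $u$ and $x$ of $v$ on $C$, and then use the range restriction $\{-2,0,2\}$ to upgrade mod-$3$ congruence to equality --- is valid, and you are right that the range argument is the nontrivial step. There is a slightly more direct variant worth knowing: when $v$ is recoloured from colour $a$ to colour $b$ in a proper $3$-colouring, \emph{every} neighbour of $v$ must carry the third colour $\gamma \notin\{a,b\}$ (otherwise one of the two colourings would be improper). In particular $c(u)=c(x)=\gamma$, so
\[
w(c,\overrightarrow{uv})+w(c,\overrightarrow{vx}) \equiv (a-\gamma)+(\gamma-a)\equiv 0
\quad\text{and}\quad
w(d,\overrightarrow{uv})+w(d,\overrightarrow{vx}) \equiv (b-\gamma)+(\gamma-b)\equiv 0,
\]
forcing both bracketed sums to be $0$ on the nose (as $0$ is the unique element of $\{-2,0,2\}$ congruent to $0$ mod $3$). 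This sidesteps the need to compare two possibly nonzero residues, but it uses properness more heavily; your argument is marginally more general in that it only uses agreement of $c$ and $d$ on $u$ and $x$, not what the common colour is. Both are correct, and yours is perfectly acceptable.
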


Let $u$ be a given vertex of a connected graph $G=(V,E)$ and $\alpha$ a given $3$-colouring of $G$. Let $T$ be a spanning tree of $G$.  For any vertex $v \in V$, let $\overrightarrow{P_{uv}}$ be the (unique) oriented path from $u$ to $v$ in $T$.   For any  $3$-colouring $c$ of $G$, the relative height of $c,v$ (with respect to $\alpha$ and $u$) is $h_{\alpha, u}(c, v)$, where
\[
h_{\alpha,u}(c, v)=w(c,\overrightarrow{P_{uv}})-w(\alpha,\overrightarrow{P_{uv}}).
\]
The next result says that the particular choice of $u$ in the definition above is not too important, although it will play a role later.
\begin{lemma}
\label{l-basechange}
Let $u,u'$ be two given vertices of a connected graph $G=(V,E)$ and let $\alpha, \beta$ be two given $3$-colourings of $G$. There exists a constant $D$ such that for every $v \in V$, we have
\[
h_{\alpha,u}(\beta, v) = h_{\alpha,u'}(\beta, v) + D. 
\]
\end{lemma}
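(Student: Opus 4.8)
The plan is to compare the height function computed from the base vertex $u$ with the one computed from $u'$, and to show that the two differ by a quantity that does not depend on $v$. Fix a spanning tree $T$ of $G$ (the same tree is used in both definitions). For any vertex $v$, let $\overrightarrow{P_{uv}}$ and $\overrightarrow{P_{u'v}}$ be the oriented tree paths from $u$ to $v$ and from $u'$ to $v$ respectively, and let $\overrightarrow{P_{uu'}}$ be the oriented tree path from $u$ to $u'$. Since $T$ is a tree, the concatenation $\overrightarrow{P_{uu'}}$ followed by $\overrightarrow{P_{u'v}}$ traces out (with orientation) exactly the edges of $\overrightarrow{P_{uv}}$, except that edges lying on the common portion of the two paths from $u$ are traversed once in each direction and so cancel in the weight sum (recall $w(c,\overrightarrow{xy}) = -w(c,\overrightarrow{yx})$). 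Hence for any $3$-colouring $c$ we have the identity
\[
w(c,\overrightarrow{P_{uv}}) = w(c,\overrightarrow{P_{uu'}}) + w(c,\overrightarrow{P_{u'v}}).
\]

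Applying this identity to $c = \beta$ and to $c = \alpha$ and subtracting, the terms $w(\beta,\overrightarrow{P_{u'v}}) - w(\alpha,\overrightarrow{P_{u'v}})$ combine to $h_{\alpha,u'}(\beta,v)$, and we are left with
\[
h_{\alpha,u}(\beta,v) = h_{\alpha,u'}(\beta,v) + \bigl(w(\beta,\overrightarrow{P_{uu'}}) - w(\alpha,\overrightarrow{P_{uu'}})\bigr).
\]
Now set $D := w(\beta,\overrightarrow{P_{uu'}}) - w(\alpha,\overrightarrow{P_{uu'}})$. This quantity depends only on $G$, $T$, $u$, $u'$, $\alpha$ and $\beta$, and crucially not on $v$, which is exactly what the lemma asserts. (Note $D$ is an integer since each weight is $\pm 1$.)

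The only point requiring a little care — and the one I would write out explicitly — is the path-cancellation identity $w(c,\overrightarrow{P_{uv}}) = w(c,\overrightarrow{P_{uu'}}) + w(c,\overrightarrow{P_{u'v}})$. This is a purely combinatorial fact about trees: in a tree, the symmetric difference of the edge sets of $P_{uu'} \cup P_{u'v}$ and $P_{uv}$ consists precisely of the edges that $P_{uu'}$ and $P_{u'v}$ share, each of which is traversed once in each direction when we form the concatenation, so its contribution to the signed weight sum is zero. Everything else is a one-line substitution. There is no real obstacle here; the lemma is essentially a bookkeeping statement, and the proof is short.
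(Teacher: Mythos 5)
Your proof is correct and uses essentially the same argument as the paper: both expand the definition of relative height, pair the $\alpha$ and $\beta$ terms, and reduce to the tree-path cancellation identity showing that the difference equals $w(\beta,\overrightarrow{P_{uu'}}) - w(\alpha,\overrightarrow{P_{uu'}})$, which is independent of $v$.
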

\begin{proof}
We have for all $v \in V$ that
\begin{align*}
h_{\alpha,u}(\beta, v) - h_{\alpha,u'}(\beta, v)
&=w(\beta,\overrightarrow{P_{uv}})-w(\alpha,\overrightarrow{P_{uv}}) 
-w(\beta,\overrightarrow{P_{u'v}})+w(\alpha,\overrightarrow{P_{u'v}}) \\
&= w(\beta,\overrightarrow{P_{uu'}})-w(\alpha,\overrightarrow{P_{uu'}}) =:D,
\end{align*}
The second equality follows because when evaluating $w(\cdot,\overrightarrow{P_{uv}})
-w(\cdot,\overrightarrow{P_{u'v}})$, we are summing the weights of edges in a walk on $T$ which consists of $\overrightarrow{P_{uv}}$ concatenated with $\overrightarrow{P_{vu'}}$ and after cancellation of edges traversed in opposite directions, we are left with $\overrightarrow{P_{uu'}}$. 
\end{proof}

The next lemma shows how relative heights change when the vertex in the second argument changes.
\begin{lemma} \label{l-heights-nec}
Let $G=(V,E)$ be a connected graph and let $\alpha$ and $c$ be two $3$-colourings of $G$.  If $\alpha$ and $c$ belong to the same component of $R_3(G)$, then, for each $vw \in E$,
\begin{equation} \label{eq-rel-heights}
h_{\alpha,u}(c, v)-h_{\alpha,u}(c, w)+w(c,\overrightarrow{vw}) = w(\alpha,\overrightarrow{vw}). 
\end{equation}
\end{lemma}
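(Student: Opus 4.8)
The plan is to reduce the left-hand side of~\eqref{eq-rel-heights} to a single statement about the weight of a cycle of $G$, and then to invoke Lemma~\ref{l-fixed-weight-cycles}. Fix the spanning tree $T$ and base vertex $u$ used in the definition of the relative heights, and let $vw \in E$. First I would express the height difference in terms of the tree path $\overrightarrow{P_{wv}}$ from $w$ to $v$ in $T$. Exactly as in the proof of Lemma~\ref{l-basechange}, concatenating $\overrightarrow{P_{uv}}$ with the reverse of $\overrightarrow{P_{uw}}$ and cancelling the edges traversed in both directions leaves precisely $\overrightarrow{P_{wv}}$; hence $w(d,\overrightarrow{P_{uv}}) - w(d,\overrightarrow{P_{uw}}) = w(d,\overrightarrow{P_{wv}})$ for every $3$-colouring $d$ of $G$. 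Applying this with $d=c$ and with $d=\alpha$ and subtracting gives
\[
h_{\alpha,u}(c,v) - h_{\alpha,u}(c,w) = w(c,\overrightarrow{P_{wv}}) - w(\alpha,\overrightarrow{P_{wv}}),
\]
so it remains only to prove $w(c,\overrightarrow{P_{wv}}) + w(c,\overrightarrow{vw}) = w(\alpha,\overrightarrow{P_{wv}}) + w(\alpha,\overrightarrow{vw})$.

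Next I would split into two cases. If $vw$ is an edge of $T$, then $\overrightarrow{P_{wv}}$ is the single edge $\overrightarrow{wv}$, so $w(d,\overrightarrow{P_{wv}}) = -w(d,\overrightarrow{vw})$ for every $3$-colouring $d$, and both sides of the displayed identity equal $0$; thus~\eqref{eq-rel-heights} holds (in fact unconditionally in this case). If $vw \notin T$, then $\overrightarrow{P_{wv}}$ followed by the edge $\overrightarrow{vw}$ is a cycle $C$ of $G$, namely the fundamental cycle of $vw$ with respect to $T$, and by the definition of the weight of an oriented cycle, $w(d,\overrightarrow{C}) = w(d,\overrightarrow{P_{wv}}) + w(d,\overrightarrow{vw})$ for every $3$-colouring $d$. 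Since $\alpha$ and $c$ lie in the same component of $R_3(G)$, Lemma~\ref{l-fixed-weight-cycles} yields $w(c,\overrightarrow{C}) = w(\alpha,\overrightarrow{C})$, which is exactly the remaining identity; rearranging gives~\eqref{eq-rel-heights}.

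I do not expect a genuine obstacle here: the argument is essentially the cancellation computation already used for Lemma~\ref{l-basechange} together with a single appeal to the cycle-weight invariant of Lemma~\ref{l-fixed-weight-cycles}. The only points requiring care are the orientation bookkeeping when passing between $\overrightarrow{P_{uv}}$, $\overrightarrow{P_{uw}}$ and $\overrightarrow{P_{wv}}$, and isolating the harmless degenerate case $vw \in E(T)$ so that the fundamental cycle $C$ is genuinely a cycle in the other case.
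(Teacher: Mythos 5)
Your proof is correct and follows essentially the same route as the paper's: both reduce the claim to the invariance of the weight of the fundamental cycle of $vw$ under $T$, and both invoke Lemma~\ref{l-fixed-weight-cycles} for that invariance, handling $vw\in T$ separately as the degenerate case. The only cosmetic difference is that you package the telescoping cleanly as $h_{\alpha,u}(c,v)-h_{\alpha,u}(c,w) = w(c,\overrightarrow{P_{wv}}) - w(\alpha,\overrightarrow{P_{wv}})$, whereas the paper works through the branching vertex $x$ and the paths $\overrightarrow{P_{xv}}$, $\overrightarrow{P_{xw}}$ explicitly.
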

\begin{proof}
If $vw \in T$, the lemma follows from the definition of relative height.
Otherwise let $x$ be the vertex farthest from $u$ that lies on both $\overrightarrow{P_{uv}}$ and $\overrightarrow{P_{uw}}$.
So,  for any colouring, 
\begin{eqnarray} 
w(\cdot, \overrightarrow{P_{xv}}) & = & w(\cdot, \overrightarrow{P_{uv}}) - w(\cdot,\overrightarrow{P_{ux}}),  \label{eq-paths1} 
\\
w(\cdot, \overrightarrow{P_{xw}}) & = & w(\cdot, \overrightarrow{P_{uw}}) - w(\cdot,\overrightarrow{P_{ux}}).  \label{eq-paths2}
\end{eqnarray}
  Notice that $\overrightarrow{P_{xv}}$, $\overrightarrow{vw}$ and $\overleftarrow{P_{xw}}$ form an oriented cycle so, from Lemma~\ref{l-fixed-weight-cycles}, we have
\begin{equation*}
w(\alpha,\overrightarrow{P_{xv}})+ w(\alpha,\overrightarrow{vw})- w(\alpha, \overrightarrow{P_{xw}} )
 = w(c,\overrightarrow{P_{xv}})+ w(c,\overrightarrow{vw})- w(c, \overrightarrow{P_{xw}}),
\end{equation*}
and substituting~(\ref{eq-paths1}) and~(\ref{eq-paths2}) (with $\cdot = \alpha,c$), and cancelling terms we obtain
\begin{equation*}
w(\alpha,\overrightarrow{P_{uv}})+ w(\alpha,\overrightarrow{vw})- w(\alpha, \overrightarrow{P_{uw}} )
 = w(c,\overrightarrow{P_{uv}})+ w(c,\overrightarrow{vw})- w(c, \overrightarrow{P_{uw}}).
\end{equation*}
Rearranging and applying the definition of relative height shows that~(\ref{eq-rel-heights}) holds.
\end{proof}

Next we show we can compute relative heights efficiently.
\begin{lemma} \label{l-heights-time}
Let $G$ be connected a graph, let $\alpha$ be a given $3$-colouring of $G$ and let $u$ be a given vertex of $G$. For any $3$-colouring $\beta$  of $G$ it is possible to find $h_{\alpha,u}(\beta,v)$ for every vertex $v$ of $G$ in time $O(n)$.
\end{lemma}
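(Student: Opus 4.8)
The plan is to fix a spanning tree $T$ of $G$ and compute $h_{\alpha,u}(\beta,v)$ for all $v$ by a single traversal (DFS or BFS) of $T$ starting from $u$. First I would observe that, by definition,
\[
h_{\alpha,u}(\beta,v) = w(\beta,\overrightarrow{P_{uv}}) - w(\alpha,\overrightarrow{P_{uv}}),
\]
so it suffices to compute $w(\beta,\overrightarrow{P_{uv}})$ and $w(\alpha,\overrightarrow{P_{uv}})$ for each $v$. For these I would exploit the telescoping property: if $v'$ is the parent of $v$ in $T$, then $\overrightarrow{P_{uv}}$ is $\overrightarrow{P_{uv'}}$ followed by the single edge $\overrightarrow{v'v}$, hence
\[
w(c,\overrightarrow{P_{uv}}) = w(c,\overrightarrow{P_{uv'}}) + w(c,\overrightarrow{v'v})
\]
for any $3$-colouring $c$, and the single-edge weight $w(c,\overrightarrow{v'v}) \in \{-1,1\}$ is determined in $O(1)$ time from $c(v')$ and $c(v)$ via $w(c,\overrightarrow{v'v}) \equiv c(v)-c(v') \bmod 3$.

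The key steps, in order: (1) construct a spanning tree $T$ of $G$ rooted at $u$ (in $O(n+m)$ time, though strictly we only need the tree, which has $n-1$ edges); (2) initialise $w(\alpha,\overrightarrow{P_{uu}}) = w(\beta,\overrightarrow{P_{uu}}) = 0$, so $h_{\alpha,u}(\beta,u)=0$; (3) traverse $T$ from the root, and when stepping from a processed vertex $v'$ to an unprocessed child $v$, set $w(\alpha,\overrightarrow{P_{uv}})$ and $w(\beta,\overrightarrow{P_{uv}})$ using the recurrence above in $O(1)$ time, then record $h_{\alpha,u}(\beta,v)$ as their difference; (4) conclude that the whole computation visits each of the $n-1$ tree edges once and does $O(1)$ work per edge, for a total of $O(n)$ time. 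A brief correctness remark is needed: the recurrence is exactly the additivity of edge weights along an oriented path, which is immediate from the definition of the weight of an oriented path as the sum of its edge weights.

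I do not expect any genuine obstacle here; this lemma is essentially a bookkeeping observation, and the only thing to be careful about is that the spanning tree should be regarded as given (or computable) so that the stated $O(n)$ bound refers to the traversal cost alone once $T$ is available — constructing $T$ itself costs $O(n+m)$, but in the intended application $T$ is already fixed. The mild subtlety worth a sentence is that we are computing heights relative to a \emph{fixed} tree $T$; Lemma~\ref{l-basechange} guarantees that changing the base vertex $u$ only shifts all heights by a common constant, so no generality is lost, but the values themselves depend on the choice of $T$ and $u$, and the algorithm simply commits to one such choice.
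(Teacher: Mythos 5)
Your proposal is correct and matches the paper's approach: both fix the spanning tree $T$, observe that computing $h_{\alpha,u}(\beta,\cdot)$ reduces to accumulating the path weights $w(\alpha,\overrightarrow{P_{uv}})$ and $w(\beta,\overrightarrow{P_{uv}})$ along tree edges, and do this with a breadth-first traversal of $T$ in $O(n)$ time. The paper simply states this in two sentences (``two breadth-first searches on $T$''), while you spell out the telescoping recurrence explicitly, but the underlying argument is the same.
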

\begin{proof}
From the definition, we need only find the weight of each oriented path from $u$ to $v$ in $T$ for both $\alpha$ and $\beta$.  
This can be done with two breadth-first searches on $T$ in time $O(n)$.
\end{proof}

\subsection{Recolouring: Changing the Height of a Vertex}

In order to understand the role of vertex heights in finding shortest paths between colourings, we investigate how the heights of vertices change along  the colourings in a recolouring sequence. Let $G$ be a
connected graph, let $\alpha$ and $\beta$ be two $3$-colourings of $G$ and let $R = c_0,c_2, \ldots, c_{\ell}$ be an $(\alpha\! \rightarrow\!\!\beta)$-recolouring. For a given vertex $u$ of $G$, we call $H_u^R(c_i, v)$  the \emph{absolute height} of vertex $v$ (with respect to $u,R,c_i$) and define it as follows:
\begin{itemize}
\item $H_u^R(c_0, u)=0$;
\item For $i>0$, 
\begin{equation} \nonumber
H_u^R(c_i, u) = \left\{ 
\begin{array}{ll}
        H_u^R(c_{i-1}, u),&\text{if $c_i(u)= c_{i-1}(u)$};\\
		H_u^R(c_{i-1}, u)+2,&\text{if $c_i(u)\equiv c_{i-1}(u)-1 \bmod 3$};\\
		H_u^R(c_{i-1}, u)-2,&\text{if $c_i(u)\equiv c_{i-1}(u)+1 \bmod 3$};
        \end{array}\right.
\end{equation}
\item For $i \geq 0$, $H_u^R(c_i, v) = H_u^R(c_i, u) + h_{\alpha,u}(c_i, v)$.
\end{itemize}
Let us elaborate on this a little and see how heights change along the recolouring sequence.  We define the height of $u$ to be initially zero.
If its colour is increased (decreased), then the height is decreased (increased) by 2.  We use an example to explain the motivation of the definition.  If the colour of $u$ is increased from 1 to 2, then all the neighbours of $u$ are coloured 3.  When $u$ is coloured 1, all the edges from its neighbours (oriented towards $u$) have weight $+1$; in some sense, $u$ is sitting above all its neighbours.  When $u$ is coloured 2, all the edges from its neighbours have weight $-1$, and we think of $u$ as being below them.  So the increase in colour corresponds to a reduction in height.  Similarly if the colour decreases, the height is raised.

The next lemma tells us that as other vertices are recoloured their heights change in the same way as $u$.
\begin{lemma} \label{l-absolute}
Let $G=(V,E)$ be a connected graph, let $\alpha$ and $\beta$ be two $3$-colourings of $G$, let  $R=c_0, \dots , c_{\ell}$  be an $(\alpha\!\rightarrow\!\!\beta)$-recolouring and let $u$ be a given vertex of $G$.
For each $v \in V$, $H_u^R(c_0, v)=0$.  For $i>0$, for  each $v \in V$:
\begin{equation} \nonumber
H_u^R(c_i, v) - H_u^R(c_{i-1}, v) = \left\{ 
\begin{array}{ll}
        0,&\text{if $c_i(v)= c_{i-1}(v)$};\\
		2,&\text{if $c_i(v)\equiv c_{i-1}(v)-1 \bmod 3$};\\
		-2,&\text{if $c_i(v)\equiv c_{i-1}(v)+1 \bmod 3$}.
        \end{array}\right.
\end{equation}
\end{lemma}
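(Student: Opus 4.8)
The plan is to unwind the definition of $H_u^R$ and reduce everything to a purely local computation about how one edge's weight changes when a single endpoint is recoloured. The base case is immediate: since $c_0=\alpha$ we have $h_{\alpha,u}(c_0,v)=w(\alpha,\overrightarrow{P_{uv}})-w(\alpha,\overrightarrow{P_{uv}})=0$, and $H_u^R(c_0,u)=0$ by definition, so $H_u^R(c_0,v)=0$ for every $v$.

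For the inductive step, observe that directly from the definition of $H_u^R$,
\begin{align*}
H_u^R(c_i,v)-H_u^R(c_{i-1},v) &= \bigl(H_u^R(c_i,u)-H_u^R(c_{i-1},u)\bigr)\\
&\quad + \bigl(w(c_i,\overrightarrow{P_{uv}})-w(c_{i-1},\overrightarrow{P_{uv}})\bigr),
\end{align*}
where the second summand is $h_{\alpha,u}(c_i,v)-h_{\alpha,u}(c_{i-1},v)$ after the $\alpha$-terms cancel. If $c_i=c_{i-1}$ both summands vanish and there is nothing to prove, so assume $c_i$ and $c_{i-1}$ disagree on exactly one vertex $z$; in particular $c_i(w)=c_{i-1}(w)$ for every $w\neq z$. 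The key fact I would isolate as a small claim is the following \emph{local} statement: if two proper $3$-colourings $c,c'$ agree on the tail $x$ of an oriented edge $\overrightarrow{xy}$, then $w(c',\overrightarrow{xy})-w(c,\overrightarrow{xy})$ equals $0$, $+2$, or $-2$ according to whether $c'(y)=c(y)$, $c'(y)\equiv c(y)-1\bmod 3$, or $c'(y)\equiv c(y)+1\bmod 3$ --- independently of the value $c(x)$. This follows from a short case check: when $c'(y)\neq c(y)$ the colours $c(x),c(y),c'(y)$ are pairwise distinct and hence exhaust $\{1,2,3\}$, which fixes $c(x)$ relative to $c(y)$ and lets one read off both $\{-1,1\}$-values of $w$. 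Note this is precisely the same $0/{+}2/{-}2$ rule that defines $H_u^R(c_i,u)-H_u^R(c_{i-1},u)$, and reversing the orientation of an edge negates the value.

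It then remains to run a case analysis on the position of $z$ relative to the unique path $\overrightarrow{P_{uv}}$ from $u$ to $v$ in the spanning tree $T$. If $z\notin V(P_{uv})$: then $z\neq u$ and $z\neq v$, so no edge of $P_{uv}$ changes weight, $c_i(u)=c_{i-1}(u)$, and hence both summands are $0$; since also $c_i(v)=c_{i-1}(v)$ the required value is $0$. If $z$ is an internal vertex of $P_{uv}$: then $z$ is the head of one path edge and the tail of the next, so by the local claim the two weight changes are negatives of each other and cancel, while again $c_i(u)=c_{i-1}(u)$ and $c_i(v)=c_{i-1}(v)$. If $z=u\neq v$: only the first edge of $\overrightarrow{P_{uv}}$ changes, with $z$ as its tail, so its contribution is the negative of the ``$u$-value'' and exactly cancels $H_u^R(c_i,u)-H_u^R(c_{i-1},u)$; and $c_i(v)=c_{i-1}(v)$, value $0$. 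Finally if $z=v$ (with $v$ possibly equal to $u$): the only surviving contribution is the last edge of $\overrightarrow{P_{uv}}$ with $z=v$ as its head (when $v\neq u$), or the term $H_u^R(c_i,u)-H_u^R(c_{i-1},u)$ (when $v=u$), and in both cases it equals the $0/{+}2/{-}2$ value dictated by $c_i(v)$ versus $c_{i-1}(v)$, as required.

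The main point to be careful about is keeping orientations and signs consistent between the local claim and the path decomposition --- that is, tracking whether $z$ occurs as a head or a tail on $\overrightarrow{P_{uv}}$; once that is pinned down the cancellations are forced. I would also sanity-check the degenerate cases $u=v$ and paths of length at most $2$, which fit the analysis without change. No step is genuinely difficult; the local claim is the only place where one actually uses that there are exactly three colours.
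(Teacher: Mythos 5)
Your proof is correct, but it takes a genuinely different route from the paper's. The paper fixes a breadth-first ordering $v_1=u,\ldots,v_n$ of the spanning tree and inducts on the vertex index, relating $D_i(v_k)$ to $D_i(v_{k^*})$ for the parent $v_{k^*}$ via a single tree edge, then case-splitting on whether $c_i$ and $c_{i-1}$ differ at $v_{k^*}$, at $v_k$, or at neither. You instead unfold $H_u^R$ directly into the $u$-term plus the telescoping sum over all edges of $\overrightarrow{P_{uv}}$, isolate a named local claim about a single edge's weight change when one endpoint's colour moves by $\pm 1 \bmod 3$, and then case-split on the \emph{position} of the unique recoloured vertex $z$ relative to the path (off the path, interior, equal to $u$, equal to $v$). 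Both arguments ultimately rest on the same local fact — that the weight of an edge with one fixed endpoint changes by $0$ or $\pm 2$ according to the same $\pm 1 \bmod 3$ rule, with a sign flip when the orientation is reversed — but the paper checks this inline in each of its three cases, whereas you extract it once and reuse it. Your decomposition avoids the vertex-by-vertex induction at the cost of tracking orientations along a whole path; the paper's is more incremental. One small point: you label the main computation as "the inductive step," but your argument is a direct calculation for arbitrary $i$ and $v$, not an induction — harmless, but the phrasing is misleading. The degenerate checks you flag ($u=v$, short paths) do indeed go through, and the local claim is stated with exactly the right hypotheses (both colourings proper, agreement on the tail).
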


\begin{proof}
We prove the lemma by induction on the number of vertices. To ease notation, define $D_i(v) = H_u^R(c_i, v) - H_u^R(c_{i-1}, v)$.

Recall that we define $h_{\alpha,u}$ and hence $H_u^R$ by first fixing a spanning tree $T$, and for every vertex $v \in V$, we define $\overrightarrow{P_{uv}}$ to be the directed path in $T$ from 
$u$ to $v$. Let $v_1, \ldots, v_n$ be a breadth-first ordering of the vertices of $T$ with $v_1 = u$.

From the definition of $H_u^R$, we see that for all $i$, $D_i(v)$ satisfies the conclusion of the lemma when $v=v_1 = u$. Assume that $D_i(v)$ satisfies the conclusion of the lemma for all $i$ and all $v \in \{v_1, \ldots, v_{k-1} \}$. Observe that if $v_{k^*}$ is the ancestor of $v_k$ in $T$ (so $k^* <k$) then for all $i$, we have
\begin{align*}
H_u^R(c_i,v_k) 
&=H_u^R(c_i, u ) + h_{\alpha, u}(c_i, v_k) \\
&=H_u^R(c_i, u) + h_{\alpha, u}(c_i, v_{k^*}) + w(c_i, \overrightarrow{v_{k^*}v_k}) - w(\alpha, \overrightarrow{v_{k^*}v_k}) \\
&=H_u^R(c_i, v_{k^*}) + w(c_i, \overrightarrow{v_{k^*}v_k}) - w(\alpha, \overrightarrow{v_{k^*}v_k}),
\end{align*}
and hence
\begin{equation}
\label{eq:D}
D_i(v_k) = D_i(v_{k^*}) 
+ \left[ w(c_i, \overrightarrow{v_{k^*}v_k})
- w(c_{i-1}, \overrightarrow{v_{k^*}v_k}) \right].
\end{equation}
Now let us check that for each $i$, $D_i(v_k)$ satisfies the conclusion of the lemma given that for all $i$, $D_i(v_{k^*})$ satisfies the conclusion of the lemma.

Suppose $c_{i-1}(v_{k^*}) = c_i(v_{k^*})$ and $c_{i-1}(v_{k}) = c_i(v_{k})$. Then both terms on the RHS of (\ref{eq:D}) are zero, and so $D_i(v_k)=0$ as required.

Suppose $c_{i-1}(v_{k^*}) \not= c_i(v_{k^*})$ (and hence $c_{i-1}(v_{k}) = c_i(v_{k})$). Then both terms on the RHS of (\ref{eq:D}) are non-zero, but they cancel and so $D_i(v_k)=0$ as required.
 (For example, if the colour of $v_{k^*}$ increases by $1$ (say from $2$ to $3$) then $D_i(v_{k^*})=-2$ by induction, and we must have $c_{i-1}(v_k) = c_i(v_k) = 1$, from which we get that $w(c_i, \overrightarrow{v_{k^*}v_k})
- w(c_{i-1}, \overrightarrow{v_{k^*}v_k}) = 2$ as required. All other cases follow similarly.)

Finally suppose $c_{i-1}(v_{k}) \not= c_i(v_{k})$ (and hence $c_{i-1}(v_{k^*}) = c_i(v_{k^*})$). Then, by induction, $D_i(v_{k^*})=0$, and $w(c_i, \overrightarrow{v_{k^*}v_k})
- w(c_{i-1}, \overrightarrow{v_{k^*}v_k})$ takes exactly the value required. 
(For example, if the colour of $v_{k}$ decreases by $1$ (say from $3$ to $2$) then  we must have $c_{i-1}(v_{k^*}) = c_i(v_{k^*}) = 1$, from which we get that $w(c_i, \overrightarrow{v_{k^*}v_k})
- w(c_{i-1}, \overrightarrow{v_{k^*}v_k}) = 2$ as required. All other cases follow similarly.) 
\end{proof}

In summary: for an $(\alpha\!\rightarrow\!\!\beta)$-recolouring $R$, we have defined an absolute height function $H_u^R$ such that the height of each vertex is initially 
zero and changes by $+2$ or $-2$ whenever its colour is, respectively, decreased or increased $\bmod 3$.

\subsection{Total Heights}

In this subsection we obtain a lower bound on the length of an $(\alpha\!\rightarrow\!\!\beta)$-recolouring in terms of vertex heights.  We also see that the value of the lower bound can be found in time $O(n+m)$, and in the next subsection we will show that a $(\alpha\!\rightarrow\!\!\beta)$-recolouring that achieves the bound can be found.

Let $G$ be a connected graph, let $\alpha$ and $\beta$ be two $3$-colourings of $G$ and let $R = c_0, \ldots, c_{\ell}$ be an $(\alpha\!\rightarrow\!\!\beta)$-recolouring. For a given vertex $u$ of $G$, the \emph{total height of $R$ (with respect to $u$)} is given by

\[ T_u(R) = \sum_{v \in V} |H_u^R(\beta, v)| = \sum_{v \in V}  \left|H_u^R(\beta, u) + h_{\alpha, u}(\beta, v)\right| 
\]

\begin{lemma} \label{l-totalheights}
Let $G=(V,E)$ be a connected graph with given vertex $u$ and let $\alpha$ and $\beta$ be two $3$-colourings of $G$.
Let $R$ be a $(\alpha\!\rightarrow\!\!\beta)$-recolouring of length $\ell$.  Then $\ell \geq \frac{1}{2} T_u(R)$.
\end{lemma}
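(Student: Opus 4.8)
The plan is to track how the quantity $\Phi_i := \sum_{v \in V} |H_u^R(c_i, v)|$ evolves along the recolouring sequence $R = c_0, \dots, c_\ell$, and to show it can increase by at most $2$ at each step while going from $0$ at the start to $T_u(R)$ at the end.

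First I would observe the two endpoint values. By Lemma~\ref{l-absolute}, $H_u^R(c_0, v) = 0$ for every $v \in V$, so $\Phi_0 = 0$. Since $R$ is an $(\alpha\!\rightarrow\!\!\beta)$-recolouring we have $c_\ell = \beta$, and by definition $T_u(R) = \sum_{v \in V} |H_u^R(\beta, v)| = \sum_{v \in V} |H_u^R(c_\ell, v)| = \Phi_\ell$.

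Next I would bound the change $\Phi_i - \Phi_{i-1}$ for each $1 \le i \le \ell$. By the definition of a recolouring, $c_i$ and $c_{i-1}$ disagree on at most one vertex, say $v_i$ (and if they disagree on no vertex then, by Lemma~\ref{l-absolute}, $H_u^R(c_i, v) = H_u^R(c_{i-1}, v)$ for all $v$, so $\Phi_i = \Phi_{i-1}$). For every $v \neq v_i$, Lemma~\ref{l-absolute} gives $H_u^R(c_i, v) = H_u^R(c_{i-1}, v)$, so those terms of the sum are unchanged; for $v = v_i$, Lemma~\ref{l-absolute} gives $|H_u^R(c_i, v_i) - H_u^R(c_{i-1}, v_i)| = 2$, hence by the triangle inequality $\bigl| |H_u^R(c_i, v_i)| - |H_u^R(c_{i-1}, v_i)| \bigr| \le 2$. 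Therefore $|\Phi_i - \Phi_{i-1}| \le 2$ for every $i$.

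Finally, telescoping: $T_u(R) = \Phi_\ell = \Phi_0 + \sum_{i=1}^{\ell} (\Phi_i - \Phi_{i-1}) \le 0 + 2\ell$, which rearranges to $\ell \ge \tfrac{1}{2} T_u(R)$. The argument is essentially routine once Lemma~\ref{l-absolute} is in hand; the only point requiring a little care is the case analysis at a recolouring step (no vertex changes versus exactly one vertex changes), and the use of the reverse triangle inequality to pass from the $\pm 2$ change in a signed height to a $\le 2$ change in its absolute value. There is no serious obstacle here — the substantive content is entirely in Lemma~\ref{l-absolute}.
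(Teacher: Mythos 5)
Your proof is correct and takes essentially the same approach as the paper: both arguments reduce entirely to Lemma~\ref{l-absolute} (each step changes exactly one vertex's absolute height by $\pm 2$, starting from all zeros) — the paper phrases it as a per-vertex counting bound while you phrase it as a telescoping potential $\Phi_i$, but these are the same argument.
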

\begin{proof}
We know that for all $v \in V$, $H_u^R(\alpha, v)=0$ by Lemma~\ref{l-absolute}.  For each colouring in $R$, the absolute height of only one vertex differs from the previous colouring in $R$ and the height difference is 2, again by Lemma~\ref{l-absolute}.  Thus, each vertex $v$ must change colour at least $|H_u^R(\beta,v)|/2$ times in $R$ and so the total number of colourings in $R$ must be at least $T_u(R)/2$.
\end{proof}

\begin{lemma} \label{l-heights}
Let $G=(V,E)$ be a connected graph with a given vertex $u$ and let $\alpha$ and $\beta$ be two $3$-colourings of $G$.
For any $(\alpha\!\rightarrow\!\!\beta)$-recolouring $R=c_0, \ldots, c_\ell$, for each vertex $v$ in $V$, and for each $i$,
\begin{equation} \label{eq-heights}
2(\alpha(v)-c_i(v)) \equiv H_u^R(c_i,v) \bmod 6.
\end{equation}
\end{lemma}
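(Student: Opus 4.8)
The plan is to prove (\ref{eq-heights}) by induction on $i$, using Lemma~\ref{l-absolute} to track how each side changes as we pass from $c_{i-1}$ to $c_i$. Recall that $c_0=\alpha$ and $c_\ell=\beta$.

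First I would dispose of the base case $i=0$: since $c_0=\alpha$, the left-hand side is $2(\alpha(v)-\alpha(v))=0$, while Lemma~\ref{l-absolute} gives $H_u^R(c_0,v)=0$ for every $v\in V$, so the congruence holds trivially.

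For the inductive step, assume (\ref{eq-heights}) holds with $i-1$ in place of $i$ for every vertex, and fix $v\in V$. Subtracting the instance at $i-1$ from the desired instance at $i$, it suffices to show
\[
H_u^R(c_i,v)-H_u^R(c_{i-1},v)\;\equiv\;2\bigl(c_{i-1}(v)-c_i(v)\bigr)\pmod 6.
\]
Now I would split into the three cases of Lemma~\ref{l-absolute}. If $c_i(v)=c_{i-1}(v)$, both sides are $0$. If $c_i(v)\equiv c_{i-1}(v)-1\pmod 3$, the left-hand side equals $2$ by Lemma~\ref{l-absolute}; on the right, since $c_{i-1}(v),c_i(v)\in\{1,2,3\}$ and $c_{i-1}(v)-c_i(v)\equiv 1\pmod 3$, we have $c_{i-1}(v)-c_i(v)\in\{1,-2\}$, so in either case $2(c_{i-1}(v)-c_i(v))\equiv 2\pmod 6$. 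The remaining case $c_i(v)\equiv c_{i-1}(v)+1\pmod 3$ is symmetric, with both sides congruent to $-2$ modulo $6$. This closes the induction.

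The whole argument is routine bookkeeping; the only subtlety worth flagging is that the correspondence is robust under the choice of colour representatives: a colour increase of $1$ modulo $3$ always produces a height decrease of exactly $2$ (by the definition of $H_u^R$ and Lemma~\ref{l-absolute}), and simultaneously $2$ times any integer $\equiv -1\pmod 3$ is always $\equiv -2\pmod 6$, independent of which of the two possible integer differences in $\{-2,-1,1,2\}$ actually occurs. I do not anticipate any real obstacle here.
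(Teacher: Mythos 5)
Your proof is correct and follows essentially the same inductive argument as the paper: both verify the base case $i=0$ from Lemma~\ref{l-absolute} and reduce the inductive step to checking that $2(c_{i-1}(v)-c_i(v)) \equiv H_u^R(c_i,v)-H_u^R(c_{i-1},v) \pmod 6$, which again follows from Lemma~\ref{l-absolute}. The only difference is that you spell out the three cases explicitly while the paper states the conclusion directly.
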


\begin{proof}
We use induction on $i$.  If $i=0$, the LHS of (\ref{eq-heights}) is zero since $c_0=\alpha$ and the RHS of (\ref{eq-heights}) is zero by Lemma~\ref{l-absolute}.

Assume (\ref{eq-heights}) holds for $c_{i-1}$ by induction. Then in order to show (\ref{eq-heights}) holds for $c_i$, it is sufficient to show
\[
2(c_{i-1}(v) - c_{i}(v)) \equiv H^R_u(c_i,v) - H^R_u(c_{i-1},v)\bmod 6.
\]
This holds by Lemma~\ref{l-absolute}.
\end{proof}

Let $G$ be a connected graph with two $3$-colourings $\alpha$ and $\beta$. We define a function which gives a lower bound on the length of any $(\alpha\!\rightarrow\!\!\beta)$-recolouring. 

A \emph{focal} vertex $u^*=u^*(\beta)$ of $\beta$ is defined as follows. If $G$ has any fixed vertex with respect to $\beta$ then take $u^*$ to be an arbitrary fixed vertex. Otherwise, pick any vertex $u$ and order the vertices of $G$ according to their heights $h_{\alpha,u}(\beta, v)$: the order is the same irrespective of the choice of $u$ by Lemma~\ref{l-basechange}. Choose $u^*$ to be a median vertex in this order.
 
Now define 
\[ J(k) = J_{\alpha,\beta}(k) = \sum_{v \in V} |k+h_{\alpha,u^\ast}(\beta,v)|.
\]
Comparing with the definition of $T_{u^\ast}(R)$, we observe that if $k$ is the height of $u^\ast$ when $\beta$ is reached by a $(\alpha\!\rightarrow\!\!\beta)$-recolouring $R$, (i.e. $k = H^R_{u^\ast}(\beta, u^\ast)$), then $T_{u^\ast}(R)=J(k)$.  Thus we could find a lower bound on the length of the shortest $(\alpha\!\rightarrow\!\!\beta)$-recolouring if we could find the $k$ that minimises $J(k)$ for $k \equiv 2(\alpha(u^\ast)-\beta(u^\ast)) \bmod 6$ (this latter condition is required by Lemma~\ref{l-heights}). 

The next (non-graph-theoretic) proposition shows how to solve such optimisation problems in general.
\begin{proposition}
\label{pr:opt}
Suppose we have a sequence of integers $x_1, \ldots, x_n$. Let $x_{{\rm med}}$ be a median value in the sequence and assume $x_{{\rm med}}=0$. For each $k \in \mathbb{Z}$, let
$Q(k) := |k+x_1| + \cdots + |k+x_n|$.

Let $C \subseteq \mathbb{Z}$ and let $c^+ \geq 0$ (resp.\ $c^- \leq 0$) be the smallest non-negative (resp.\ largest non-positive) element in $C$. Then
\[
\min_{k \in C} Q(k) = \min \{ Q(c^+), Q(c^{-}) \}.
\] 
\end{proposition}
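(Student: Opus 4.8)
The plan is to exploit the fact that $Q$ is a convex, piecewise-linear function of the integer variable $k$ whose minimum over all of $\mathbb{Z}$ is attained at $k=0$ (this is the familiar fact that a median minimises the sum of absolute deviations). Once this ``unimodality'' is established, the proposition reduces to the observation that $c^{+}$ and $c^{-}$ are the elements of $C$ closest to $0$ from above and from below, so they dominate all other elements of $C$ in $Q$-value.

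First I would analyse the forward difference $Q(k+1)-Q(k)$. Writing $Q(k)=\sum_{i=1}^{n}\bigl|k-(-x_i)\bigr|$ and using that, for an integer $t$, one has $|t+1|-|t|=1$ when $t\ge 0$ and $|t+1|-|t|=-1$ when $t\le -1$, one obtains $Q(k+1)-Q(k)=2\,\bigl|\{i:x_i\ge -k\}\bigr|-n$. As $k$ increases, $-k$ decreases, so $\bigl|\{i:x_i\ge -k\}\bigr|$ is non-decreasing; hence $Q(k+1)-Q(k)$ is non-decreasing in $k$, i.e.\ $Q$ is discretely convex. Next, using that $x_{\rm med}=0$ is a median (so $\bigl|\{i:x_i\ge 0\}\bigr|\ge n/2$ and $\bigl|\{i:x_i\le 0\}\bigr|\ge n/2$), I would check the two end differences: $Q(1)-Q(0)=2\bigl|\{i:x_i\ge 0\}\bigr|-n\ge 0$, and $Q(0)-Q(-1)=2\bigl|\{i:x_i\ge 1\}\bigr|-n=n-2\bigl|\{i:x_i\le 0\}\bigr|\le 0$. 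Combining these with convexity (the differences are monotone in $k$) gives $Q(k+1)-Q(k)\ge 0$ for every $k\ge 0$ and $Q(k+1)-Q(k)\le 0$ for every $k\le -1$. Therefore $Q$ is non-increasing on the non-positive integers and non-decreasing on the non-negative integers.

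Finally I would conclude as follows. Assuming $C$ contains at least one non-negative and at least one non-positive element (as it does in every application, where $C$ is a full residue class, and as the statement implicitly presumes so that $c^{+},c^{-}$ exist), take any $k\in C$. If $k\ge 0$ then $k\ge c^{+}\ge 0$, so monotonicity of $Q$ on the non-negative integers gives $Q(k)\ge Q(c^{+})$; if $k\le 0$ then $k\le c^{-}\le 0$, so monotonicity on the non-positive integers gives $Q(k)\ge Q(c^{-})$. Since every integer is non-negative or non-positive, $Q(k)\ge\min\{Q(c^{+}),Q(c^{-})\}$ for all $k\in C$, and hence $\min_{k\in C}Q(k)\ge\min\{Q(c^{+}),Q(c^{-})\}$; as $c^{+},c^{-}\in C$, the reverse inequality is immediate, so equality holds.

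I expect no serious obstacle here: the only point requiring care is the unimodality step, where one must correctly turn ``$0$ is a median'' into the two sign conditions on the end differences $Q(1)-Q(0)$ and $Q(0)-Q(-1)$ and then use convexity to propagate those signs to all $k$. The closest-element argument for $c^{\pm}$ is then routine.
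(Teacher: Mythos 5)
Your proof is correct and rests on the same core idea as the paper's: the median condition $x_{\mathrm{med}}=0$ forces $Q$ to be non-increasing as $k$ moves towards $0$ from either side, after which the optimum over $C$ is attained at $c^+$ or $c^-$, the elements of $C$ nearest to $0$. The routes differ in how the monotonicity is obtained. The paper shows directly, for each $k>c^+\geq 0$, that $Q(k-1)\leq Q(k)$ by observing that the shifted sequence $(k+x_i)_i$ has a positive median, hence at least as many positive as non-positive entries, and then iterates down to $c^+$ (and symmetrically for $c^-$). You instead first establish discrete convexity of $Q$ via the forward-difference formula $Q(k+1)-Q(k)=2\,|\{i:x_i\geq -k\}|-n$, verify the signs of only the two end differences $Q(1)-Q(0)$ and $Q(0)-Q(-1)$ using the median property at $k=0$, and propagate those signs by convexity. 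The convexity detour is not strictly needed, but it localises the use of the median hypothesis to a single point and makes the unimodality of $Q$ explicit; both arguments are complete and the closing closest-element step is identical. You also rightly flag that the statement implicitly assumes $c^+$ and $c^-$ exist, which holds in the paper's application since $C$ is a full residue class modulo $6$.
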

\begin{proof}
Suppose $k > c^+ \geq 0$ and observe that $k + x_{{\rm med}}>0$ is a median value in the sequence $s_k=k+x_1, \ldots, k+x_n$, and so this sequence contains at least as many positive values as non-positive values. Let $p$ (resp.\ $n$) be the number of positive (resp.\ non-positive) values in $s_k$. Then
\[
Q(k-1) = Q(k) -p +n \leq Q(k).
\]
Iterating this argument we see that $Q(c^+) \leq Q(k)$. An exactly analogous argument shows that if $k<c^-$ then $Q(c^-) \leq Q(k)$, proving the lemma.
\end{proof}

In order to apply this proposition, let us define $C_{\alpha,\beta}$ to be the set of integers congruent to $2(\alpha(u^\ast)-\beta(u^\ast))$ modulo $6$ and let $k_1,k_2$ be the smallest non-negative and largest non-positive integers in $C_{\alpha,\beta}$ (thus we have $(k_1,k_2) \in \{(0,0), (2,-4), (4,-2)\}$).

\begin{lemma}\label{l-j} 
Let $G$ be a connected graph, let $\alpha$ and $\beta$ be two $3$-colourings of $G$ and  let $u^\ast$ be a focal vertex of $\beta$. Suppose that $u^\ast$ is not fixed (with respect to $\beta$) and let $k \in C_{\alpha,\beta}$.
Then $J(k) \geq \min \{J(k_1), J(k_2)\}$.
\end{lemma}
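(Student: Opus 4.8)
The plan is to deduce Lemma~\ref{l-j} directly from the optimisation result Proposition~\ref{pr:opt}, applied to the sequence of heights $\bigl(h_{\alpha,u^\ast}(\beta,v)\bigr)_{v\in V}$. Note that $J(k)=\sum_{v\in V}|k+h_{\alpha,u^\ast}(\beta,v)|$ is exactly the function $Q(k)$ of Proposition~\ref{pr:opt} with $x_v=h_{\alpha,u^\ast}(\beta,v)$, and that $C_{\alpha,\beta}$ is the set of integers congruent to $2(\alpha(u^\ast)-\beta(u^\ast))\bmod 6$, whose smallest non-negative and largest non-positive elements are precisely $k_1$ and $k_2$. Hence, once the hypotheses of the proposition are checked, it yields $\min_{k\in C_{\alpha,\beta}}J(k)=\min\{J(k_1),J(k_2)\}$, which in particular gives $J(k)\geq\min\{J(k_1),J(k_2)\}$ for the $k\in C_{\alpha,\beta}$ in the statement.

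The step that needs care is verifying the one nontrivial hypothesis of Proposition~\ref{pr:opt}, namely that $0$ is a median value of the sequence $\bigl(h_{\alpha,u^\ast}(\beta,v)\bigr)_{v\in V}$. First, since $u^\ast$ is assumed \emph{not} fixed with respect to $\beta$, the graph $G$ has no fixed vertex at all: otherwise the definition of a focal vertex would force $u^\ast$ to be a (fixed) vertex. So $u^\ast$ arises from the second case of the definition, and is a median vertex in the ordering of $V$ by the heights $h_{\alpha,u}(\beta,\cdot)$ for some reference vertex $u$. By Lemma~\ref{l-basechange}, changing the reference vertex from $u$ to $u^\ast$ shifts every height by the same additive constant, so the induced linear order on $V$ is unchanged; therefore $u^\ast$ is also a median vertex in the order by $h_{\alpha,u^\ast}(\beta,\cdot)$. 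Finally, $h_{\alpha,u^\ast}(\beta,u^\ast)=w(\beta,\overrightarrow{P_{u^\ast u^\ast}})-w(\alpha,\overrightarrow{P_{u^\ast u^\ast}})=0$, so the value attained at the median vertex $u^\ast$ is $0$; that is, $x_{\rm med}=0$ as required.

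With this in hand I would simply invoke Proposition~\ref{pr:opt} with $x_v=h_{\alpha,u^\ast}(\beta,v)$ and $C=C_{\alpha,\beta}$, noting $c^+=k_1$ and $c^-=k_2$, to conclude $\min_{k\in C_{\alpha,\beta}}J(k)=\min\{J(k_1),J(k_2)\}$ and hence the claimed bound. The main (indeed essentially the only) obstacle is the bookkeeping of the preceding paragraph: one must be careful that ``median vertex'' is a property of the \emph{order} on $V$, that this order is independent of the base point used to define the heights (this is exactly what Lemma~\ref{l-basechange} buys us), and consequently that the median \emph{value} of the height sequence computed with base point $u^\ast$ really is $h_{\alpha,u^\ast}(\beta,u^\ast)=0$, which is the form in which $J$ is written. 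Everything else is a direct substitution into Proposition~\ref{pr:opt}.
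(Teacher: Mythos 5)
Your proof is correct and follows exactly the paper's approach: it is a direct application of Proposition~\ref{pr:opt} after observing that the sequence $\bigl(h_{\alpha,u^\ast}(\beta,v)\bigr)_{v\in V}$ has median value $h_{\alpha,u^\ast}(\beta,u^\ast)=0$. The paper compresses the median-value verification into a parenthetical remark, while you spell out the bookkeeping (no fixed vertex exists, base-change via Lemma~\ref{l-basechange} preserves the order, the trivial path gives height $0$ at $u^\ast$), which is a faithful unpacking of the same argument rather than a different route.
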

\begin{proof}
Simply note that the sequence of numbers $h_{\alpha,u^\ast}(\beta,v)$ (which has median $h_{\alpha, u^\ast}(\beta,u^\ast)=0$) satisfies the premise of Proposition~\ref{pr:opt} with $Q=J$, $C=C_R$, $c^+=k_1$, and $c^-=k_2$.  
\end{proof}

\begin{lemma} \label{l-k-bound}
Let $G=(V,E)$ be a connected graph, let $\alpha$ and $\beta$ be two $3$-colourings of $G$ and  let $u^\ast$ be a focal vertex of $\beta$.
 For any $(\alpha\!\rightarrow\!\!\beta)$-recolouring $R$ of length $\ell$, $\ell \geq \frac{1}{2} \min \{J(k_1), J(k_2)\}$.
\end{lemma}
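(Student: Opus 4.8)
The plan is to chain together the bounds already established in this section. First I would invoke Lemma~\ref{l-totalheights} on the given $(\alpha\!\rightarrow\!\!\beta)$-recolouring $R$ of length $\ell$ to obtain $\ell \geq \frac{1}{2}\,T_{u^\ast}(R)$. Then I would set $k := H_{u^\ast}^R(\beta,u^\ast)$, the absolute height of the focal vertex $u^\ast$ once $\beta$ is reached; comparing the defining formulas for $T_{u^\ast}(R)$ and for $J$ shows that this particular $k$ satisfies $T_{u^\ast}(R)=J(k)$, so $\ell \geq \frac{1}{2} J(k)$. It therefore suffices to prove $J(k)\geq \min\{J(k_1),J(k_2)\}$.

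The first step towards that is to check that $k \in C_{\alpha,\beta}$. This is exactly Lemma~\ref{l-heights} applied with the colouring $c_\ell=\beta$ and the vertex $v=u^\ast$: it gives $k = H_{u^\ast}^R(\beta,u^\ast) \equiv 2(\alpha(u^\ast)-\beta(u^\ast)) \bmod 6$, which is the defining congruence of $C_{\alpha,\beta}$. Now I would split into cases according to whether $u^\ast$ is fixed with respect to $\beta$ (recall that the focal vertex is chosen to be a fixed vertex precisely when one exists).

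If $u^\ast$ is not fixed with respect to $\beta$, then Lemma~\ref{l-j} applies verbatim to $k \in C_{\alpha,\beta}$ and yields $J(k)\geq\min\{J(k_1),J(k_2)\}$, finishing the argument. If $u^\ast$ is fixed with respect to $\beta$, I would argue that the bound holds with equality: since $R$ exhibits a walk in $R_3(G)$ from $\alpha$ to $\beta$, these colourings lie in the same component, so Lemma~\ref{l-Gfc} forces $\alpha(u^\ast)=\beta(u^\ast)$; hence $C_{\alpha,\beta}$ is the set of multiples of $6$ and $k_1=k_2=0$. Moreover a fixed vertex is never recoloured along any recolouring starting from $\alpha$, so $c_i(u^\ast)=\alpha(u^\ast)$ for all $i$ and thus $k=H_{u^\ast}^R(\beta,u^\ast)=0$. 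Consequently $J(k)=J(0)=J(k_1)=J(k_2)$, and again $\ell\geq\frac{1}{2} J(k)=\frac{1}{2}\min\{J(k_1),J(k_2)\}$.

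I expect the only real care to be needed in the fixed-vertex case — confirming that $C_{\alpha,\beta}$ collapses to the multiples of $6$ and that the absolute height of $u^\ast$ is genuinely $0$ — since Lemma~\ref{l-j} is stated only for a non-fixed focal vertex. Everything else is a direct substitution chain through Lemmas~\ref{l-totalheights}, \ref{l-heights} and~\ref{l-j}.
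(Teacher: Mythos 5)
Your proof is correct and follows essentially the same route as the paper: set $k=H_{u^\ast}^R(\beta,u^\ast)$, chain Lemma~\ref{l-totalheights} into $\ell\geq\frac{1}{2}J(k)$, and then split on whether $u^\ast$ is fixed, invoking Lemma~\ref{l-j} in the non-fixed case and deducing $k=k_1=k_2=0$ in the fixed case. You actually spell out two details the paper leaves implicit — verifying $k\in C_{\alpha,\beta}$ via Lemma~\ref{l-heights}, and using Lemma~\ref{l-Gfc} to get $\alpha(u^\ast)=\beta(u^\ast)$ so that $(k_1,k_2)=(0,0)$ — which is a welcome tightening but not a different argument.
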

\begin{proof}
Let $k=H_{u^\ast}^R(\beta, u^\ast)$.  Using Lemma~\ref{l-totalheights} and the definitions, we have 
\begin{eqnarray*}
\ell & \geq & \frac{1}{2} T_{u^\ast}(R) \\
&=& \frac{1}{2} \left( \sum_{v \in V} |H_{u^\ast}^R(\beta,u^\ast) + h_{\alpha,u^\ast}(\beta,v)| \right) \\
& = & \frac{1}{2} \left( \sum_{v \in V} |k +h_{\alpha,u^\ast}(\beta,v)| \right) \\
&=& \frac{1}{2} J(k).
\end{eqnarray*} 
If $u^\ast$ is not fixed, the lemma follows from Lemma~\ref{l-j}.  If $u^\ast$ is fixed, then $k=H_{u^\ast}^R(\beta, u^\ast) = 0$ (by Lemma~\ref{l-absolute} and recalling that the colour of $u^\ast$ can never change). Thus $J(k)=J(0)=J(k_1)=J(k_2)$ and we are done.
\end{proof}

\begin{lemma} \label{l-k-time}
Let $G$ be a connected  graph, let $\alpha$ and $\beta$ be two $3$-colourings of $G$ and  let $u^\ast$ be a focal vertex of $\beta$.
The value of $\frac{1}{2}\min \{J(k_1), J(k_2)\}$ can be computed in time~$O(n)$.
\end{lemma}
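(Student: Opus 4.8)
The plan is to show that, with the focal vertex $u^\ast$ supplied as part of the input, every ingredient of the expression $\frac{1}{2}\min\{J(k_1),J(k_2)\}$ can be obtained in linear time.

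First I would determine the constants $k_1$ and $k_2$. By the paragraph preceding the lemma, $C_{\alpha,\beta}$ is the residue class of $2(\alpha(u^\ast)-\beta(u^\ast))$ modulo $6$, and $(k_1,k_2)$ is one of the pairs $(0,0)$, $(2,-4)$, $(4,-2)$, determined only by this residue. Since $\alpha(u^\ast)$ and $\beta(u^\ast)$ can be read off in constant time, so can $k_1$ and $k_2$.

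Next I would compute the relative heights. Applying Lemma~\ref{l-heights-time} with $u^\ast$ playing the role of $u$ yields the value $h_{\alpha,u^\ast}(\beta,v)$ for every $v\in V$ in total time $O(n)$ (two breadth-first searches on the spanning tree $T$). Given these $n$ integers, each sum $J(k_i)=\sum_{v\in V}|k_i+h_{\alpha,u^\ast}(\beta,v)|$ is evaluated by a single linear scan over the list; doing this for $i=1,2$, then taking the smaller value and halving it, is an $O(n)$ computation overall, which proves the lemma.

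I do not anticipate a real obstacle here: the potentially expensive step — locating a focal vertex, which in the non-fixed case would require a median computation over the heights and in the fixed case the set of fixed vertices from Lemma~\ref{l-fixed-time} — is not charged, since $u^\ast$ is given. The only mild point worth noting is that $h_{\alpha,u^\ast}(\beta,v)$ is an integer whose absolute value is bounded by the depth of $v$ in $T$, hence $O(n)$, so forming each $J(k_i)$ involves only arithmetic on numbers of polynomial size and stays within the usual model assumptions.
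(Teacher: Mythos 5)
Your proof is correct and takes the same approach as the paper: it invokes Lemma~\ref{l-heights-time} to compute the relative heights $h_{\alpha,u^\ast}(\beta,v)$ in $O(n)$ time and then evaluates $J(k_1)$ and $J(k_2)$ by a linear scan. Your added remarks — that $k_1,k_2$ follow in constant time from the residue of $2(\alpha(u^\ast)-\beta(u^\ast))$ modulo $6$, and that locating $u^\ast$ is not charged because it is supplied — are sound clarifications of points the paper leaves implicit.
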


\begin{proof}
All that is needed is to find the relative heights $h_{\alpha,u^\ast}(\beta,v)$ for each vertex $v$, which, as we noted in the proof of Lemma~\ref{l-heights-time}, can be done in time $O(n)$.
\end{proof}

\subsection{A Recolouring Algorithm}

In this subsection, we present an algorithm to find an $(\alpha\!\rightarrow\!\!\beta)$-recolouring $R$. 

First we show that we can focus on heights rather than colours.  That is, if we find a colouring that achieves certain values for the vertex heights, we will know the colours of the vertices.

\begin{lemma}\label{l-h}
Let $G=(V,E)$ be a connected graph with given vertex $u$ and let $\alpha$ and $\beta$ be two $3$-colourings of $G$.
Let $k \equiv 2(\alpha(u)-\beta(u)) \bmod 6$ be an integer.
If $R=c_0, \ldots, c_\ell$ is a recolouring such that for all $v \in V$, $H_u^R(c_\ell,v)=k+h_{\alpha,u}(\beta,v)$, then $c_\ell=\beta$.
\end{lemma}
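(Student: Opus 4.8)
The plan is to reduce the statement about the final colouring $c_\ell$ to an application of Lemma~\ref{l-heights}. That lemma tells us that for \emph{any} $(\alpha\!\rightarrow\!\!\beta)$-recolouring $R$, for every vertex $v$ and every index $i$, we have $2(\alpha(v)-c_i(v)) \equiv H_u^R(c_i,v) \bmod 6$. Applying this with $i=\ell$ gives $2(\alpha(v)-c_\ell(v)) \equiv H_u^R(c_\ell,v) \bmod 6$ for every $v$. By the hypothesis of our lemma, $H_u^R(c_\ell,v) = k + h_{\alpha,u}(\beta,v)$, so it suffices to show that $k + h_{\alpha,u}(\beta,v) \equiv 2(\alpha(v)-\beta(v)) \bmod 6$, i.e. that $\beta$ itself is consistent with this height assignment in the same congruence sense.

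First I would establish the congruence identity for $\beta$. Recall $h_{\alpha,u}(\beta,v) = w(\beta,\overrightarrow{P_{uv}}) - w(\alpha,\overrightarrow{P_{uv}})$, and that for any $3$-colouring $c$ and oriented edge $\overrightarrow{xy}$ we have $w(c,\overrightarrow{xy}) \equiv c(y)-c(x) \bmod 3$; summing along the tree path $\overrightarrow{P_{uv}}$ this telescopes to $w(c,\overrightarrow{P_{uv}}) \equiv c(v)-c(u) \bmod 3$. Hence $h_{\alpha,u}(\beta,v) \equiv (\beta(v)-\beta(u)) - (\alpha(v)-\alpha(u)) \bmod 3$. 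Multiplying by $2$: $2 h_{\alpha,u}(\beta,v) \equiv 2(\alpha(u)-\beta(u)) + 2(\beta(v)-\alpha(v)) \bmod 6$ — but I need to be slightly careful, since the congruence for $h$ is only mod $3$ while I want a conclusion mod $6$. The point is that $w(c,\overrightarrow{P_{uv}})$ is an \emph{integer} congruent to $c(v)-c(u) \bmod 3$, so $h_{\alpha,u}(\beta,v)$ is an integer with $h_{\alpha,u}(\beta,v) \equiv (\beta(v)-\beta(u)) - (\alpha(v)-\alpha(u)) \bmod 3$, and therefore $2 h_{\alpha,u}(\beta,v) \equiv 2[(\beta(v)-\beta(u)) - (\alpha(v)-\alpha(u))] \bmod 6$ (multiplying a mod-$3$ congruence by $2$ yields a mod-$6$ congruence). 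Adding $k \equiv 2(\alpha(u)-\beta(u)) \bmod 6$ to this gives $k + 2h_{\alpha,u}(\beta,v) \equiv 2(\beta(v)-\alpha(v)) \bmod 6$, which rearranges to $k + 2h_{\alpha,u}(\beta,v) \equiv -2(\alpha(v)-\beta(v)) \bmod 6$. Comparing this with what Lemma~\ref{l-heights} gives at $i=\ell$, namely $2(\alpha(v)-c_\ell(v)) \equiv k + h_{\alpha,u}(\beta,v)$ — I must reconcile the factor-of-$2$ discrepancy, which signals that the intended reading of the hypothesis $H_u^R(c_\ell,v)=k+h_{\alpha,u}(\beta,v)$ together with Lemma~\ref{l-heights} already pins down $c_\ell(v) \bmod 3$ directly: from $2(\alpha(v)-c_\ell(v)) \equiv 2(\alpha(v)-\beta(v)) \bmod 6$ we get $\alpha(v)-c_\ell(v) \equiv \alpha(v)-\beta(v) \bmod 3$ (dividing a mod-$6$ congruence of even numbers by $2$), hence $c_\ell(v) \equiv \beta(v) \bmod 3$, and since both are in $\{1,2,3\}$ this forces $c_\ell(v)=\beta(v)$.

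So the clean write-up is: (i) invoke Lemma~\ref{l-heights} at $i=\ell$ to get $2(\alpha(v)-c_\ell(v)) \equiv H_u^R(c_\ell,v) \bmod 6$ for all $v$; (ii) substitute the hypothesis $H_u^R(c_\ell,v) = k + h_{\alpha,u}(\beta,v)$; (iii) show via the telescoping-weight computation above that $k + h_{\alpha,u}(\beta,v) \equiv 2(\alpha(v)-\beta(v)) \bmod 6$; (iv) conclude $2(\alpha(v)-c_\ell(v)) \equiv 2(\alpha(v)-\beta(v)) \bmod 6$, divide by $2$ to get $c_\ell(v) \equiv \beta(v) \bmod 3$, and since colours lie in $\{1,2,3\}$, deduce $c_\ell(v)=\beta(v)$ for every $v$, i.e. $c_\ell=\beta$. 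The main obstacle — really the only subtle point — is step (iii) and the careful tracking of whether congruences hold mod $3$ or mod $6$: one must remember that the tree-path weights are genuine integers (so $h_{\alpha,u}(\beta,v)$ is a well-defined integer, not merely a residue), that multiplying a mod-$3$ congruence by $2$ gives a mod-$6$ congruence, and conversely that a mod-$6$ congruence between even integers may be divided by $2$ to yield a mod-$3$ congruence. Everything else is bookkeeping. I would also note at the end that the hypothesis $k \equiv 2(\alpha(u)-\beta(u)) \bmod 6$ is exactly the consistency condition needed at the vertex $u$ itself (where $h_{\alpha,u}(\beta,u)=0$), which is why it appears in the statement.
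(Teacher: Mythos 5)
Your strategy is genuinely different from the paper's. The paper proves the lemma by induction along a breadth-first ordering of the spanning tree: it handles $u$ using Lemma~\ref{l-heights}, and then pushes equality from each vertex to its tree-child using the fact that $h_{\alpha,u}(\cdot,v_k)$ and $h_{\alpha,u}(\cdot,v_{k^*})$ differ by an edge-weight term. Your plan instead is to apply Lemma~\ref{l-heights} directly at \emph{every} vertex $v$ and close the argument with a pure congruence computation, which avoids the induction entirely. In principle this is cleaner.

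However, as written, there is a genuine gap in your step (iii). You set out to prove $k + h_{\alpha,u}(\beta,v) \equiv 2(\alpha(v)-\beta(v)) \bmod 6$, but the computation you carry out concerns $k + 2h_{\alpha,u}(\beta,v)$ rather than $k + h_{\alpha,u}(\beta,v)$ (the "factor-of-$2$ discrepancy" you flag), and even that computation has a sign/arithmetic slip: adding $k\equiv 2(\alpha(u)-\beta(u))$ to $2h_{\alpha,u}(\beta,v)\equiv 2(\alpha(u)-\beta(u))+2(\beta(v)-\alpha(v))$ produces $4(\alpha(u)-\beta(u))+2(\beta(v)-\alpha(v))\pmod 6$, not $2(\beta(v)-\alpha(v))$, so the $u$-dependent term does not cancel. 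You then assert $2(\alpha(v)-c_\ell(v)) \equiv 2(\alpha(v)-\beta(v)) \bmod 6$ in order to divide by $2$, but this congruence was never established — it is exactly the statement that needs proving, so the argument is circular at that point.

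There are two clean ways to repair your line of reasoning, both worth noting. The simplest is to give up on the $\bmod\,6$ congruence and work entirely $\bmod\,3$: from Lemma~\ref{l-heights} and the hypothesis you have $2(\alpha(v)-c_\ell(v)) \equiv k + h_{\alpha,u}(\beta,v) \bmod 6$, hence also $\bmod\,3$; using $k\equiv 2(\alpha(u)-\beta(u))$ and $h_{\alpha,u}(\beta,v)\equiv(\beta(v)-\beta(u))-(\alpha(v)-\alpha(u))$ both $\bmod\,3$, a two-line computation gives $k+h_{\alpha,u}(\beta,v)\equiv\beta(v)-\alpha(v)\equiv 2(\alpha(v)-\beta(v))\bmod 3$, so $2(\alpha(v)-c_\ell(v))\equiv 2(\alpha(v)-\beta(v))\bmod 3$, and since $2$ is invertible $\bmod\,3$ this yields $c_\ell(v)\equiv\beta(v)\bmod 3$, whence $c_\ell(v)=\beta(v)$ as both lie in $\{1,2,3\}$. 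The alternative, if you do want the $\bmod\,6$ statement, is to first prove that $h_{\alpha,u}(\beta,v)$ is always \emph{even}: $w(\beta,\overrightarrow{P_{uv}})$ and $w(\alpha,\overrightarrow{P_{uv}})$ are each sums of $\pm1$ over the same tree path, so they have the same parity, and hence their difference is even. Combined with the $\bmod\,3$ residue, this pins down $h_{\alpha,u}(\beta,v)\bmod 6$ by the Chinese Remainder Theorem. Without this parity observation, knowing $h_{\alpha,u}(\beta,v)$ only $\bmod\,3$ simply does not determine $k+h_{\alpha,u}(\beta,v)\bmod 6$, which is the root of your difficulty.
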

\begin{proof}
All congruences are $\bmod \, 6$. We prove the lemma by induction on the number of vertices. Let $T$ be the spanning tree used to define $h$, and let $\{v_1, \ldots, v_n\}$ be a breadth-first ordering of $T$ with $v_1:=u$.

First we consider $u$.  As $h_{\alpha,u}(\beta, u)=0$ (by definition), we have $H_u^R(c_\ell, u)=k$ and so $H_u^R(c_\ell, u) \equiv 2(\alpha(u) - \beta(u))$.
Using Lemma~\ref{l-heights}, we find
\[
2(\alpha(u)-c_\ell(u)) \equiv 2(\alpha(u) - \beta(u)).
\]
which implies $c_\ell(u)=\beta(u)$.

Now assume that $c_\ell(v)=\alpha(v)$ for all $v \in \{v_1, \ldots, v_{k-1} \}$ and let $v_{k^*}$ be the ancestor of $v_k$ in $T$ (so $k^*<k$).
 By the premise of the lemma, we have $H_u^R(c_\ell,v_k)=k+h_{\alpha,u}(\beta,v_k)$. Also, 
\[
H_u^R(c_\ell,v_k) = H_u^R(c_\ell,u) + h_{\alpha,u}(c_\ell,v_k) = k + h_{\alpha,u}(c_\ell,v_k)
\]
where the first equality is by definition of $H$ and the second was noted above. Combining, we have $h_{\alpha,u}(\beta,v_k) = h_{\alpha,u}(c_\ell,v_k)$. Noting that 
\[
h_{\alpha,u}(\cdot,v_k) = h_{\alpha,u}(\cdot,v_{k^*}) + w(\cdot,\overrightarrow{v_{k^*}v_k}) - w(\alpha,\overrightarrow{v_{k^*}v_k}),
\] 
we deduce
\[
h_{\alpha,u}(\beta,v_{k^*}) + w(\beta,\overrightarrow{v_{k^*}v_k})
=
h_{\alpha,u}(c_\ell,v_{k^*}) + w(c_\ell,\overrightarrow{v_{k^*}v_k}).
\]
But since $\beta$ and $c_\ell$ are identical on $v_1, \ldots, v_{k^*}$, then $h_{\alpha,u}(\beta,v_{k^*}) =
h_{\alpha,u}(c_\ell,v_{k^*})$ which together with the above implies that  $w(\beta,\overrightarrow{v_{k^*}v_k}) = w(c_\ell,\overrightarrow{v_{k^*}v_k})$. Since $\beta(v_{k^*}) = c_\ell(v_{k^*})$, we have $\beta(v_{k}) = c_\ell(v_{k})$ as required.    
\end{proof}

\begin{lemma} \label{l-algo}
Let $G=(V,E)$ be a connected graph, let $\alpha$ and $\beta$ be two $3$-colourings of $G$ and  let $u^\ast$ be a focal vertex of $\beta$.
If $u^\ast$ is fixed with respect to $\beta$, let $k=0$;  otherwise, 
let $k \equiv 2(\beta(u^\ast)-\alpha(u^\ast)) \bmod 6$ be an integer.
If 
\begin{itemize}
\item[\emph{(A1)}] $F_{G,\alpha}^i=F_{G,\beta}^i$ for $i=1,2,3$,
\item[\emph{(A2)}] for each $vw \in E$,
$h_{\alpha,u^\ast}(\beta, v)-h_{\alpha,u^\ast}(\beta, w)+w(\beta,\overrightarrow{vw}) = w(\alpha,\overrightarrow{vw})$. 
\end{itemize}
then there exists an $(\alpha\!\rightarrow\!\!\beta)$-recolouring $R$ of length $\ell$ such that $\ell=\frac{1}{2} J(k)$.
\end{lemma}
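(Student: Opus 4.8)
The plan is to reduce the statement to a question about heights and then recolour greedily. By Lemma~\ref{l-h} it is enough to produce an $(\alpha\!\rightarrow\!\!\beta)$-recolouring $R=c_0,\dots,c_\ell$ with $c_0=\alpha$, $\ell=\frac{1}{2} J(k)$, and $H_{u^\ast}^R(c_\ell,v)=k+h_{\alpha,u^\ast}(\beta,v)$ for every $v\in V$; write $H^\ast(v):=k+h_{\alpha,u^\ast}(\beta,v)$ (note $k$ satisfies the congruence hypothesis of Lemma~\ref{l-h}). By Lemma~\ref{l-absolute} we have $H_{u^\ast}^R(c_0,v)=0$ for all $v$, and each recolouring step changes the absolute height of at most one vertex, that one by $\pm 2$. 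Hence, with the potential $\Phi(c):=\frac{1}{2}\sum_{v\in V}\bigl|H^\ast(v)-H_{u^\ast}^R(c,v)\bigr|$, we have $\Phi(c_0)=\frac{1}{2} J(k)$, and the goal is to build $R$ so that $\Phi$ decreases by exactly $1$ at every step. Since $\Phi\ge 0$ and $\Phi(c)=0$ forces $c=\beta$ by Lemma~\ref{l-h}, such an $R$ automatically has length exactly $\frac{1}{2} J(k)$ and ends at $\beta$.

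Before the greedy construction I would record what the hypotheses give for fixed vertices. By (A1), $F:=F_{G,\alpha}=F_{G,\beta}$ as sets and $\alpha$ and $\beta$ agree on $F$; consequently, along any recolouring from $\alpha$ the colours on $F$ never change and, by Lemma~\ref{l-absolute}, the absolute height of each $v\in F$ stays $0$ throughout. The crucial fact I need is that $H^\ast(v)=0$ for every $v\in F$ --- equivalently $h_{\alpha,u^\ast}(\beta,v)=0$ on $F$ when $u^\ast$ is fixed (so $k=0$), while $F=\emptyset$ when $u^\ast$ is not fixed, by the definition of a focal vertex. I would obtain this from (A1) and (A2): by the description of $F$ in the proof of Lemma~\ref{l-fixed-time}, every vertex of $G[F]$ has a fixed neighbour of each of the two other colours, so each $v\in F$ lies on a cycle of $G$ coloured periodically $\cdots123\cdots$; by (A2) the function $h_{\alpha,u^\ast}(\beta,\cdot)$ is constant on each such cycle and on each connected component of $G[F]$ (because $\alpha=\beta$ there); and one then verifies that each of these constants is $0$ --- this last step being where the equality of the fixed \emph{sets} in (A1), not just of the colours on them, is used in an essential way.

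The core of the argument is the following greedy step: \emph{if $c$ is reachable from $\alpha$ and $\Phi(c)>0$, then some vertex can be legally recoloured so that its absolute height moves $2$ closer to its $H^\ast$-value}, which decreases $\Phi$ by $1$. Write $g(v):=H^\ast(v)-H_{u^\ast}^R(c,v)$; by Lemma~\ref{l-heights} (and the congruence on $k$) one has $g(v)\equiv 2(c(v)-\beta(v))\bmod 6$, so $g(v)\ne 0$, indeed $|g(v)|\ge 2$, at every vertex where $c$ and $\beta$ disagree. When $\Phi(c)>0$ we therefore have $\max_v g(v)>0$ or $\min_v g(v)<0$; by a symmetric argument (swapping the roles of increasing and decreasing a colour) assume $M:=\max_v g(v)>0$ and put $S:=\{v\in V: g(v)=M\}$. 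For $v\in S$, recolouring $v$ to colour $c(v)-1\bmod 3$ raises $H_{u^\ast}^R(\cdot,v)$ by $2$ (by the definition of absolute height) and, as $M\ge 2$, moves it toward $H^\ast(v)$ without overshooting; so it suffices to find \emph{some} $v\in S$ with no neighbour coloured $c(v)-1$, so that this recolouring is legal.

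Suppose no such $v$ exists, so every $v\in S$ has a neighbour $w$ with $c(w)=c(v)-1\bmod 3$. A short computation with Lemma~\ref{l-heights-nec} applied to $c$ and (A2) applied to $\beta$ gives $g(w)-g(v)=w(\beta,\overrightarrow{vw})+1\in\{0,2\}$; maximality of $M$ forces $g(w)=M$, whence $w\in S$ and $\beta(w)=\beta(v)-1\bmod 3$. Following such an edge out of each vertex of $S$, the colours under $c$ --- and simultaneously under $\beta$ --- decrease by $1$ modulo $3$ at each step, so within $S$ we close up a cycle $C'$ of length divisible by $3$ that $c$ colours periodically: a fixed cycle with respect to $c$. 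Its vertices are then fixed with respect to $c$, hence with respect to $\alpha$ (same component), so $C'\subseteq F$; but then $H_{u^\ast}^R(c,v)=0$ and $H^\ast(v)=0$ on $C'$, so $g\equiv 0$ on $C'$, contradicting $g\equiv M>0$ on $S$. This proves the greedy step, and iterating it $\frac{1}{2} J(k)$ times from $\alpha$ yields the required $R$. The step I expect to be the main obstacle is the fixed-vertex fact $H^\ast|_F\equiv 0$: without it the argmax set $S$ could be a ``lifted'' fixed cycle admitting no legal height-raising move, and one must rule out that $h_{\alpha,u^\ast}(\beta,\cdot)$ equals a nonzero multiple of $6$ on a fixed cycle --- one clean way being to invoke that (A1) and (A2) together force $\alpha$ and $\beta$ into the same component of $R_3(G)$ and then apply Lemma~\ref{l-absolute} to a recolouring joining them.
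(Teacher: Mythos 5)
Your proof is correct and is essentially the paper's own argument in mildly different clothing. The paper also reduces to achieving the target heights via Lemma~\ref{l-h}, then at each step picks a vertex $x$ extremising $t-H$, traces a maximal \emph{rising} (or \emph{falling}) path from $x$, and recolours its endpoint; its Claim establishes that $t-H$ is monotone along such paths (so the endpoint is also extremal and hence is in your set $S$) and that a rising path that closes up produces a fixed cycle, yielding exactly the contradiction you derive via the argmax set $S$. In other words, your chain of ``stuck'' vertices inside $S$ is the paper's rising path; the greedy rule and the cycle-based contradiction are the same. Two small remarks. First, you are right to observe that the lemma as stated puts $k\equiv 2(\beta(u^\ast)-\alpha(u^\ast))\bmod 6$; this has the wrong sign relative to Lemma~\ref{l-h} and to the definition of $C_{\alpha,\beta}$ (both use $2(\alpha-\beta)$), and you silently use the correct congruence. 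Second, the fact you flag as the main obstacle, $H^\ast\equiv 0$ on $F$, is invoked at exactly the same place in the paper's proof (the line asserting $t(w_q)=H^R_{u^\ast}(c,w_q)=0$ when the rising path would close a fixed cycle), and the paper is no more explicit there than you are; so this is a shared dependency rather than a defect particular to your write-up, and making it explicit, as you do, is an improvement.
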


\begin{proof}
We will define $R$ by describing how to recolour from $\alpha$ to a colouring $c$ such that, for all $v$, $H_{u^\ast}^R(c,v)=k+h_{\alpha,u^\ast}(\beta,v)$. Then, by Lemma~\ref{l-h}, $c=\beta$ as required.   Let $t(v)$  denote $k+h_{\alpha,u^\ast}(\beta,v)$.  This is the \emph{target} height of $v$.  When every vertex has reached its target, we are done.  Note that $J(k) = \sum_{v \in V} |t(v)|$.

In order to construct an $(\alpha\!\rightarrow\!\!\beta)$-recolouring of length $\ell = \frac{1}{2}J(k)$, it is sufficient to ensure that at each given stage, if $v$ is the vertex that changes colour, then this change reduces the difference between the current absolute height of $v$ and $t(v)$ by 2 (recall that by Lemma~\ref{l-absolute} the absolute height of $v$ changes by $2$ while all other absolute heights remain the same).

More definitions: for a vertex $u$ in~$G$  and colouring $c$, a \emph{rising path} from $u$ is a path on vertices $u=v_0, v_1,  \ldots v_t$ such that, for $1 \leq i \leq t$, $c(v_i) \equiv c(v_{i-1}) + 1 \bmod 3$. If $v_t$ has no neighbours coloured $c(v_t)+1 \bmod 3$ then the path is \emph{maximal} (and in this case we can recolour $v_t$ to $c(v_t)+1$ if we wish).
  A \emph{falling path} from $u$ is the same except that the colours decrease rather than increase moving along the path from $u$.  (That is, the colours along a rising path are, for example, $231231231231 \cdots$, and along a falling path are, for example, $321321321321 \cdots$)

It might not always be possible to find a maximal rising (or falling) path from a vertex $u$, but the following claim will be enough for us.

\begin{claim}
Let $R$ be an $(\alpha\!\rightarrow\!\!c)$-recolouring of $G$ and let
$v$ be a vertex of $G$.
If $t(v) - H_{u^\ast}^R(c,v) < 0$, then there is a maximal rising path $P$ from $v$, and, for every vertex $w$ on $P$,
\begin{equation} \label{c-rising}
t(w) - H_{u^\ast}^R(c,w) \leq  t(v)-H_{u^\ast}^R(c,v)<0.
\end{equation} 
If $t(v) - H_{u^\ast}^R(c,v) > 0$, then there is maximal falling path $Q$ from $v$, and, for every vertex $w$ on $Q$,
\begin{equation*} 
t(w) - H_{u^\ast}^R(c,w) \geq  t(v)-H_{u^\ast}^R(c,v)>0.
\end{equation*} 
\end{claim}

We will prove the first statement of the claim (the second can be proved in a similar way) by finding a maximal rising path.  To do this, we start with the trivial rising path on the single vertex $v$, and show that we can always extend the path or that it is maximal.  So suppose that we have found a rising path $w_0w_1 \cdots w_q$ where $w_0=v$ and each $w_i$, $0 \leq i \leq q$, satisfies (\ref{c-rising}).  

If $w_q$ has no neighbour coloured $c(w_q)+1$, the path is maximal and we are done.  

Suppose that $w_r$, $r<q$, is a neighbour of $w_q$ and is coloured $c(w_q)+1$.  Then $w_r, w_{r+1}, \ldots, w_q$ are fixed vertices (since the graph they induce is a fixed cycle coloured $\cdots 123123 \cdots$).
  But the absolute height of a fixed vertex is always 0 so $t(w_q)=H_{u^\ast}^R(c,w_q)=0$ contradicting~(\ref{c-rising}).  

The remaining possibility is that $w_q$ has a neighbour $w_{q+1}$ coloured $c(w_q)+1$ that is not already part of the rising path.   If we can show that (\ref{c-rising}) is satisfied with $w=w_{q+1}$, then we can extend the rising path to include $w_{q+1}$.  We have, using definitions and Lemma~\ref{l-heights-nec}, 
\begin{eqnarray*}
H_{u^\ast}^R(c,w_{q+1}) & = & H_{u^\ast}^R(c, u^\ast) + h_{\alpha, u^\ast}(c,w_{q+1}) \\
&=& H_{u^\ast}^R(c, u^\ast) + h_{\alpha, u^\ast}(c, w_q)+w(c,\overrightarrow{w_qw_{q+1}}) - w(\alpha,\overrightarrow{w_qw_{q+1}}) \\
&=& H_{u^\ast}^R(c, w_q) +w(c,\overrightarrow{w_qw_{q+1}}) - w(\alpha,\overrightarrow{w_qw_{q+1}}).
\end{eqnarray*} 
And using (A2), we find
\begin{eqnarray*}
t(w_{q+1}) & = & k + h_{\alpha, u^\ast}(\beta,w_{q+1}) \\
& = & k + h_{\alpha, u^\ast}(\beta,w_q) +w(\beta,\overrightarrow{w_qw_{q+1}}) - w(\alpha,\overrightarrow{w_qw_{q+1}}) \\
& = & t(w_q) +w(\beta,\overrightarrow{w_qw_{q+1}}) - w(\alpha,\overrightarrow{w_qw_{q+1}}).
\end{eqnarray*}
Subtracting 
\begin{equation*}
t(w_{q+1}) - H_{u^\ast}^R(c,w_{q+1}) = t(w_{q}) - H_{u^\ast}^R(c,w_{q}) + w(\beta,\overrightarrow{w_qw_{q+1}}) - w(c,\overrightarrow{w_qw_{q+1}})
\end{equation*}
Noting that $w(c,\overrightarrow{w_qw_{q+1}}) = 1 \geq  w(\beta,\overrightarrow{w_qw_{q+1}})$, we have
\begin{equation*}
t(w_{q+1}) - H_{u^\ast}^R(c,w_{q+1}) \leq t(w_{q}) - H_{u^\ast}^R(c,w_{q}),
\end{equation*}
and the claim is proved.

\medskip
We now inductively describe how to obtain an $(\alpha\!\rightarrow\!\!\beta)$-recolouring of $G$ of length $\ell = \frac{1}{2}J(k)$.
Suppose we have a partial recolouring $R$ from $\alpha$ to $c$.
Recall that it is sufficient for us to specify which vertex should be recoloured (to give a new proper colouring) and to show that this change reduces the difference between the current height and the target height of $v$ (while keeping all other heights unchanged).

\begin{enumerate}
\item Find a vertex $x$ for which $|t(x)- H_{u^\ast}^R(c,x)|$ is maximum.
\item If $t(x)-H_{u^\ast}^R(c,x)>0$, find a maximal rising path from $x$.  Else find a maximal falling path from $x$.  In either case, let $v$ be the end-vertex of the path.  
\item Change the colour of $v$ so that $|t(v)-H_{u^\ast}^R(c,v)|$ is reduced by 2.
\end{enumerate}
Consider the case where $t(x)-H_{u^\ast}^R(c,x)>0$ (the other case is analagous).  Then, by the Claim,  $t(v)-H_{u^\ast}^R(c,v)>0$.  By Lemma~\ref{l-absolute}, we can increase $H_{u^\ast}^R(c,v)$ by $2$ (while keeping the absolute heights of all other vertices unchanged) by increasing the colour of $v$ and this reduces $|t(x)-H_{u^\ast}^R(c,x)|$ by $2$. As $v$ is at the end of a maximal rising path this increase in colour results in a proper colouring, as required.  
\end{proof}

\medskip
\noindent
{\it Proof of Theorem~\ref{t-ptime}}.
Let $G=(V,E)$ be a graph and let $\alpha$ and $\beta$ be two $3$-colourings of $G$. Assume $G$ is connected; otherwise consider each component separately.
By Lemmas~\ref{l-Gfc} and~\ref{l-heights-nec}, a path between $\alpha$ and $\beta$ in $R_3(G)$ only exists if 
\begin{itemize}
\item $F_{G,\alpha}^i=F_{G,\beta}^i$ for $i=1,2,3$,
\item for each $vw \in E$,
$h_u(\beta, v)-h_u(\beta, w)+w(\beta,\overrightarrow{vw}) = w(\alpha,\overrightarrow{vw})$. 
\end{itemize}
and by Lemma~\ref{l-algo} these conditions are also sufficient.  By Lemmas~\ref{l-fixed-time} and~\ref{l-heights-time}, these conditions can be tested in time $O(n+m)$.  Moreover if a path between $\alpha$ and $\beta$ does exist, then, by Lemma~\ref{l-k-bound}, it has length at least $\frac{1}{2} \min \{J(k_1), J(k_2) \}$, and, by Lemma~\ref{l-algo}, a path of exactly this length does exist.  By Lemma~\ref{l-k-time}, the value of $\frac{1}{2} \min \{J(k_1), J(k_2) \}$ can be found in time $O(n+m)$ 
and so the length of a shortest path between $\alpha$ and $\beta$ can be found in time $O(n+m)$.
This implies Theorem~\ref{t-ptime}.
\qed

\bigskip

It is straightforward to see that $\frac{1}{2} \min \{J(k_1), J(k_2) \}$ is $O(n^2)$, and
we note that in~\cite{CHJ06b}, examples of families of graphs with pairs of 3-colourings at distance $\Omega(n^2)$ were given.  The purpose of the description of the algorithm in Lemma~\ref{l-algo} was to estabish the sufficiency of the necessary conditions of Lemmas~\ref{l-Gfc} and~\ref{l-heights-nec} and its running time is not optimised, it can easily be adapted to run in time $O(n^2)$.  The key is to note that once a vertex $x$ for which $|t(x)- H^R(c,x)|$ is maximum is found, then $|t(v)- H^R(c,v)|$ is maximum for every vertex on a maximal rising path from $x$ (if $t(x)- H^R(c,x)>0$, else consider falling paths) and each of these vertices must be recoloured in turn.

\section{An FPT Algorithm for $k$-Colouring Reconfiguration}\label{s-fpt}

In this section we will present our \FPT\ algorithm for \textsc{$k$-Colouring Reconfiguration} when parameterized by~$\ell$.   
Let $G=(V,E)$ be a graph on $n$ vertices, and let $\alpha$, $\beta$ be two proper $k$-colourings of $G$.
First we prove 
three 
lemmas concerning the vertices that might be recoloured if a path between $\alpha$ and $\beta$ of length at most $\ell$ does exist.
That is, we assume that $(G,\alpha,\beta,\ell)$ is a yes-instance of {\sc $k$-Colouring Reconfiguration}.
This means that  there exists an $(\alpha\!\rightarrow\!\!\beta)$-recolouring $R= c_0, \dots,c_{\ell}$. We assume that $R$ has {\em minimum length}.  

We say that~$R$ \emph{recolours} a vertex $u$ if $c_q(u) \neq \alpha(u)$ for some $q$.   Notice that if for each recoloured vertex $u$ we find the least $q$ such that $c_q(u) \neq \alpha(u)$, these values must be distinct (else $c_q$ and $c_{q-1}$ disagree on more than one vertex).  Thus the number of distinct vertices recoloured by $R$ is at most $\ell$. We will prove something stronger. For $0 \leq q \leq \ell$, let $W_q$ be the set of vertices on which $c_0$ and~$c_q$ disagree,
that is, $W_q = \{u\in V : c_0(u)\neq c_q(u)\}$.
\begin{lemma} \label{lemma-wq}
For all $q$ with $1\leq q\leq \ell$, the set $W_q$ has size $|W_q| \leq q$. 
\end{lemma}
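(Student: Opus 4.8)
The plan is a straightforward induction on $q$, tracking how $W_q$ evolves as we move along the recolouring $R = c_0,\dots,c_\ell$. This is essentially a step-by-step version of the observation made just before the lemma statement: consecutive colourings of $R$ differ on at most one vertex, so passing from $c_{q-1}$ to $c_q$ can enlarge the set of vertices that disagree with $c_0$ by at most one.

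First I would record the base case: since $c_0$ agrees with itself everywhere, $W_0 = \emptyset$ and $|W_0| = 0$. For the inductive step, assume $|W_{q-1}| \le q-1$ for some $q$ with $1 \le q \le \ell$. By definition of a recolouring, $c_{q-1}$ and $c_q$ disagree on at most one vertex; call it $v$ if such a vertex exists, and note that if no such vertex exists then $c_q = c_{q-1}$, hence $W_q = W_{q-1}$ and we are done. Otherwise, for every vertex $u \ne v$ we have $c_q(u) = c_{q-1}(u)$, so $u \in W_q$ if and only if $u \in W_{q-1}$; thus $W_q$ and $W_{q-1}$ can differ only in whether they contain $v$, which gives $|W_q| \le |W_{q-1}| + 1 \le q$. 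By induction the bound holds for all $q$ with $1 \le q \le \ell$.

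I do not expect any real obstacle here: this is a warm-up lemma, and in particular the argument does not use the minimality of $R$ — it holds for an arbitrary $(\alpha\!\rightarrow\!\!\beta)$-recolouring, since any sequence of $q$ single-vertex changes starting from $c_0$ can alter at most $q$ vertices in total. Minimality of $R$ will only be needed for the finer structural statements about recoloured vertices that come later in this section. The one point deserving a word of care is the degenerate step $c_q = c_{q-1}$ (permitted by the definition of a recolouring), which is handled above and in any case only helps.
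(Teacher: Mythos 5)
Your proof is correct and rests on the same key observation as the paper's, namely that $W_q$ and $W_{q-1}$ can differ only on the (at most one) vertex where $c_q$ and $c_{q-1}$ disagree; the paper merely packages this as a least-counterexample contradiction (picking the smallest $r$ with $|W_r|>r$ and exhibiting two vertices in $W_r\setminus W_{r-1}$) rather than as a direct induction. The two presentations are equivalent, and your remark that minimality of $R$ is not needed here is also consistent with the paper.
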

\begin{proof}
Suppose this is false and let $r$ be the smallest value such that $|W_r|>r$.   So $|W_{r-1}| \leq r-1$ (clearly $r-1\geq0$ as $W_0$ is the empty set). 
Then there are (at least) two vertices $v_1, v_2$ in $W_{r} \setminus W_{r-1}$, and so, for $i \in \{1,2\}$, $c_{r-1}(v_i) = c_0(v_i) \neq c_{r}(v_i)$, and $c_r$ and $c_{r-1}$ disagree on more than one vertex;
a contradiction.
\end{proof}
For any $u \in V$, let $N(u)$ be the set of neighbours of $u$.  For any  $v \in N(u)$,  let  $N(u,v)=\{w \in N(u) : \alpha(w)=\alpha(v) \}$; that is, the set of neighbours of $u$ with the same colour as $v$ in $\alpha$.
Let $A_0 = \{v \in V : \alpha(v) \neq \beta(v) \}$ be the set of vertices on which $\alpha$ and $\beta$ disagree.  For $i \geq 1$, let 
$A_i  =  \bigcup_{u \in A_{i-1}}  \{v \in N(u) : |N(u,v)| \leq \ell \}$.
That is, to find $A_{i}$ consider each vertex $u$ in $A_{i-1}$ and partition $N(u)$ into colour classes (according to the colouring $\alpha$).  Vertices in $N(u)$ that belong to colour classes of size at most $\ell$ belong to $A_{i}$.
Note that two sets $A_h$ and $A_i$ need not be disjoint.

Our first goal is to show that each vertex recoloured by $R$ must be in $A^*=\bigcup_{h=0}^{\ell-1} A_h$. 
We will then show that the size of $A^*$ is bounded by a function of $k+\ell$. 
This will enable us to use brute-force to find $R$ or some other 
 $(\alpha\!\rightarrow\!\!\beta)$-recolouring of~$G$ (if it exists).

\begin{lemma} \label{lemma-R}
Each vertex recoloured by $R$ belongs to $A^*$. 
\end{lemma}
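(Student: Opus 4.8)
The plan is to recast the nested definition of the $A_i$'s as reachability in an auxiliary digraph, and then use the minimality of $R$ to show that the vertex set $S$ of vertices recoloured by $R$ stays within distance $\ell-1$ of $A_0$ in that digraph. Recall (as noted just before Lemma~\ref{lemma-wq}) that $R$ recolours at most $\ell$ vertices, so $|S|\le\ell$; recall also that every vertex on which $\alpha$ and $\beta$ disagree is recoloured by \emph{any} $(\alpha\!\rightarrow\!\!\beta)$-recolouring, so $A_0\subseteq S$ and in particular $\alpha(v)=\beta(v)$ for every $v\in S\setminus A_0$. I would introduce the digraph $D$ on vertex set $V$ with an arc $u\to v$ whenever $uv\in E$ and $|N(u,v)|\le\ell$. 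Directly from the definitions, $A_i$ is the set of out-neighbours in $D$ of $A_{i-1}$, and hence $A^*=\bigcup_{h=0}^{\ell-1}A_h$ is precisely the set of vertices reachable from $A_0$ by a directed path of length at most $\ell-1$ in $D$. One pigeonhole observation will be used repeatedly: if $w\in N(v)$ and $c_q(w)=\alpha(v)$ for some $q$, then $|N(w,v)|\le\ell$ (so $w\to v$ is an arc of $D$); indeed, if $|N(w,v)|>\ell\ge|S|$ then some $z\in N(w,v)$ is never recoloured, so $c_q(z)=\alpha(z)=\alpha(v)$ for all $q$, and since $zw\in E$ this forces $c_q(w)\ne\alpha(v)$ for all $q$, a contradiction.

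With this reformulation, the heart of the argument is to prove that \emph{every $v\in S$ is reachable from $A_0$ by a directed path lying inside the induced sub-digraph $D[S]$}. This suffices: such a shortest path is simple and uses only vertices of $S$, so it has length at most $|S|-1\le\ell-1$, whence $v\in A^*$. To prove the reachability claim, I would let $U$ be the set of vertices of $S$ that are \emph{not} reachable from $A_0$ inside $D[S]$ and suppose for contradiction that $U\ne\emptyset$; note $U\subseteq S\setminus A_0$, so $\alpha(v)=\beta(v)$ for all $v\in U$. Now ``freeze'' $U$: define $R'=c_0',\dots,c_\ell'$ by $c_q'(v)=\alpha(v)$ for $v\in U$ and $c_q'(v)=c_q(v)$ otherwise. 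It is routine that $c_0'=\alpha$ and $c_\ell'=\beta$ (using $\alpha=\beta$ on $U$), that consecutive colourings of $R'$ differ on at most one vertex, and that some consecutive pair of $R'$ coincides (pick any vertex of $U$, which is recoloured by $R$). Hence, after deleting a repeated colouring, $R'$ would be an $(\alpha\!\rightarrow\!\!\beta)$-recolouring of length at most $\ell-1$, contradicting the minimality of $R$ — \emph{provided} every $c_q'$ is a proper colouring.

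I expect the proof that each $c_q'$ is proper to be the main obstacle, and this is exactly the point where the definition of $U$ pays off. Since $c_q$ is proper and $R'$ differs from $R$ only on vertices of $U$, any monochromatic edge of $c_q'$ has an endpoint $v\in U$, say $vw\in E$ with $c_q'(v)=c_q'(w)=\alpha(v)$. If $w\in U$ then $c_q'(w)=\alpha(w)\ne\alpha(v)$ because $\alpha$ is proper on the edge $vw$, a contradiction; hence $w\notin U$. Then $c_q(w)=\alpha(v)\ne\alpha(w)$, so $w$ is recoloured by $R$, i.e.\ $w\in S\setminus U$; by the pigeonhole observation there is an arc $w\to v$ in $D$, hence in $D[S]$. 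But $w\in S\setminus U$ is reachable from $A_0$ inside $D[S]$, and the arc $w\to v$ then makes $v$ reachable too, contradicting $v\in U$. This shows every $c_q'$ is proper, so $R'$ really does contradict minimality of $R$; therefore $U=\emptyset$, every $v\in S$ is reachable from $A_0$ inside $D[S]$, and the argument of the previous paragraph gives $v\in A^*$, as required.
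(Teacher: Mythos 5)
Your proof is correct and takes a genuinely different route from the paper's. The paper works with the ``layers'' $L_i=A_i\setminus\bigcup_{h<i}A_h$: it sets $z$ to be the largest index for which $R$ recolours some vertex of $L_z$, argues by contradiction that each $L_1,\dots,L_{z-1}$ must also contain a recoloured vertex, and deduces $z\le\ell-1$ by counting distinct recoloured vertices. Both proofs use the same ``freeze a subset and obtain a shorter recolouring'' mechanism, but they freeze different sets: the paper freezes the complement of $\bigcup_{h<i}L_h$ after assuming some intermediate layer $L_i$ is recolouring-free, while you freeze exactly those recoloured vertices that are unreachable from $A_0$ inside $D[S]$. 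You also replace the paper's invocation of Lemma~\ref{lemma-wq} (that $|W_q|\le q$) in the properness check with the coarser but sufficient bound $|S|\le\ell$. What your reformulation buys is transparency: the digraph picture identifies $A^*$ as the set of vertices out-reachable from $A_0$ in at most $\ell-1$ steps, and your argument gives $S\subseteq A^*$ directly by showing every recoloured vertex is reachable within $D[S]$, hence at distance at most $|S|-1\le\ell-1$, whereas the paper's layer-counting bounds $z$ but is less explicit about why a recoloured vertex cannot fall outside $\bigcup_i A_i$ altogether. The paper's version, in turn, stays closer to the recursive definition of the $A_i$ and avoids introducing an auxiliary structure.
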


\begin{proof}
For $i\geq 0$, let $L_i=A_i \setminus (\bigcup_{h<i} A_h)$ be the set of vertices that are in $A_i$ but not in any $A_h$ with $h<i$.
Let $z$ be the greatest value such that $R$ recolours a vertex in $L_z$; denote this vertex by $v_z$.
By definition, every vertex in~$A_0$ is recoloured by $R$. Let $v_0\in A_0$.
We claim that also for $1 \leq i \leq z-1$, there is a vertex $v_i \in L_i$ that is recoloured by~$R$. 
Then, as  $v_0,\ldots,v_z$ are distinct vertices and $R$ has length~$\ell$, we have $z\leq \ell-1$ proving the lemma.
For contradiction, assume there is a set $L_i$ ($1\leq i\leq z-1$) that contains no vertex recoloured by~$R$.

From $R$ we construct a new recolouring sequence $R'$ by ignoring
every recolouring step done to a vertex in $V \setminus \bigcup_{h<i} L_h$.  
For $0 \leq q \leq \ell$, let $d_q$ be a colouring of $G$ such that
\begin{itemize}
\item if $u \in \bigcup_{h<i} L_h$, $d_q(u)=c_q(u)$;
\item if $u \notin \bigcup_{h<i} L_h$, $d_q(u) = \alpha(u)$.
\end{itemize}
Let $R'$ be the sequence $d_0, \dots, d_{\ell}$.  Note that $d_0=\alpha$, 
as $d_0(u)$ is either $c_0(u)$ or $\alpha(u)$, and $c_0 = \alpha$.  Moreover, if $u \in \bigcup_{h<i} L_h=\bigcup_{h<i}A_i$ then
$d_{\ell}(u)=c_{\ell}(u)=\beta(u)$, and if $u \notin \bigcup_{h<i} L_h$ then $d_{\ell}(u)=\alpha(u)=\beta(u)$ (since $\alpha$ and $\beta$ only disagree on vertices in $A_0$); thus $d_{\ell}=\beta$.
This means that if we can show that $d_1,\ldots,d_{\ell-1}$ are proper colourings, then $R'$ is an $(\alpha\!\rightarrow\!\!\beta)$-recolouring. We will prove this first.

Assume to the contrary that $R'$ contains a colouring $d_q$ that is not proper. Then there is an edge~$uv$ with $d_q(u) = d_q(v)$.  If $u$ and $v$ both belong to $\bigcup_{h<i} L_h$ then $c_q(u) = c_q(v)$, and if neither belong to $\bigcup_{h<i} L_h$ then $\alpha(u) = \alpha(v)$. Both cases are not possible, as $c_q$ and~$\alpha$ are proper colourings.
Hence we may assume, without loss of generality, that $u \in \bigcup_{h<i} L_h$ and $v \notin \bigcup_{h<i} L_h$.
Then $c_q(u)=d_q(u)=d_q(v)=\alpha(v)$ by the definition of $d_q$. 

As $v \in N(u)$, the set $N(u,v)$ exists. First suppose $|N(u,v)| \leq \ell$. Then $v\in A_{i}$ by the definition of $A_i$. Hence $v\in L_h$ for some $h\le i$. 
As $v \notin \bigcup_{h<i} L_h$,  we obtain $v\in L_i$. 
By assumption, no vertex of $L_i$ is recoloured by $R$. Hence $c_q(v)=\alpha(v)$
and thus $c_q(u)=c_q(v)$ contradicting the fact that $c_q$ is a proper $k$-colouring.   

Now suppose $|N(u,v)|>\ell$. 
Because $c_q(u)=\alpha(v)$ and $c_q$ is proper, we find that $c_q(w) \neq c_q(u)=\alpha(v)=\alpha(w)$ for all $w \in N(u,v)$.
Thus $W_q \supseteq N(u,v)$ and so $|W_q| \geq |N(u,v)| > \ell\geq q$ contradicting the fact that $|W(q)|\leq q$ by Lemma~\ref{lemma-wq}.
So, $d_q$ must be proper. We conclude that $R'$ is an $(\alpha\!\rightarrow\!\!\beta)$-recolouring of length~$\ell$.

We now proceed as follows. 
Recall that $v_z\in L_z$.
Then there is a pair of colourings $c_q$ and $c_{q+1}$ that differ only on $v_z$.  
Because $v_z\in L_z$,  $v_z \notin \bigcup_{h<i} L_h$. Hence,
 $d_q$ and $d_{q+1}$ are identical colourings. We remove $d_q$ from~$R'$ to obtain another $(\alpha\!\rightarrow\!\!\beta)$-recolouring, which has length shorter than $\ell$, contradicting that~$R$ has minimum length. This completes the proof.
\end{proof}
We now show that 
 $|A^*|$ is bounded by a function that depends only on~$k$ and $\ell$.

\begin{lemma}\label{lemma-sp}
The set $A^*$ has size $|A^*| \le \ell \cdot (k \ell)^{\ell}$.  
\end{lemma}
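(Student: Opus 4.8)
The plan is to bound the sizes $|A_0|, |A_1|, \dots, |A_{\ell-1}|$ layer by layer and then sum a (geometric) series.

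First I would bound the base layer: $|A_0| \le \ell$. Since $c_0 = \alpha$ and $c_\ell = \beta$, the set $A_0 = \{v \in V : \alpha(v)\neq\beta(v)\}$ is precisely $W_\ell$, so Lemma~\ref{lemma-wq} applied with $q = \ell$ gives $|A_0| = |W_\ell| \le \ell$. (Equivalently, every vertex of $A_0$ is recoloured by $R$ and $R$ recolours at most $\ell$ vertices.)

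Next comes the inductive step: for every $i \ge 1$ we have $|A_i| \le k\ell \cdot |A_{i-1}|$. To see this, fix a vertex $u \in A_{i-1}$ and recall $A_i \cap$ (contribution of $u$) $= \{v \in N(u) : |N(u,v)| \le \ell\}$. Every neighbour of $u$ receives under $\alpha$ a colour in $\{1,\dots,k\}\setminus\{\alpha(u)\}$, so $N(u)$ is partitioned into at most $k-1$ colour classes under $\alpha$; each such class equals $N(u,v)$ for the vertices $v$ it contains, and it contributes to $A_i$ only when its size is at most $\ell$. Hence $u$ contributes at most $(k-1)\ell \le k\ell$ vertices to $A_i$, and since $A_i$ is the \emph{union} over $u \in A_{i-1}$ of these sets, $|A_i| \le \sum_{u \in A_{i-1}} (k-1)\ell \le k\ell\,|A_{i-1}|$. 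Combining with the base case gives $|A_i| \le \ell\,(k\ell)^i$ for all $i \ge 0$ by induction.

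Finally I would sum over the $\ell$ layers that make up $A^*$:
\[
|A^*| \;\le\; \sum_{h=0}^{\ell-1} |A_h| \;\le\; \sum_{h=0}^{\ell-1} \ell\,(k\ell)^h \;\le\; \ell \cdot \ell\,(k\ell)^{\ell-1} \;\le\; \ell\,(k\ell)^{\ell},
\]
where the last step uses $\ell \le k\ell$. I do not expect a genuine obstacle here: the only points needing a little care are identifying $A_0$ with $W_\ell$ so that Lemma~\ref{lemma-wq} applies, correctly counting at most $k-1$ (hence $\le k\ell$) relevant colour classes per vertex in the recursion, and bounding the geometric sum crudely by $\ell(k\ell)^\ell$ rather than chasing the exact value.
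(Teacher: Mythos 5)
Your proposal is correct and follows essentially the same argument as the paper: bound $|A_0|\le\ell$ (you via $A_0=W_\ell$ and Lemma~\ref{lemma-wq}, the paper via the observation that every vertex of $A_0$ must be recoloured), establish the per-layer recursion $|A_i|\le k\ell\,|A_{i-1}|$ by counting colour classes of size at most $\ell$ in each neighbourhood, and sum the resulting geometric series. The only cosmetic differences are your slightly sharper per-vertex bound of $(k-1)\ell$ and your cruder bound on the geometric sum; both land on the same conclusion.
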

\begin{proof}
Recall that $A_0$ is the set of vertices on which $\alpha$ and $\beta$ disagree. 
Hence,
all vertices of~$A_0$ need to be recoloured by $R$.
Thus we have  $|A_0|\le \ell$. 
Now let $i$ be such that $1\le i \le \ell -1$.
Recall that the set $A_i$ is defined as $A_i =  \bigcup_{u \in A_{i-1}}  \{v \in N(u) : |N(u,v)| \leq \ell \}$.
By this recursive definition, each vertex of $A_i$ is a neighbour 
of a vertex of $A_{i-1}$, and each vertex of $A_{i-1}$ has at most $k\cdot \ell$ 
neighbours in $A_i$. Consequently $|A_i| \le |A_{i-1}| \cdot k\cdot \ell$, for all
$1\le i \le \ell -1$.
Hence,
$ |A^*| = \sum_{i=0}^{\ell -1} |A_i| \le  \sum_{i=0}^{\ell -1}  \ell \cdot (k\cdot \ell)^{i}
\le \ell \cdot \frac{(k\cdot \ell)^{\ell} -1}{k\cdot \ell -1}\le \ell \cdot (k\cdot \ell)^{\ell}$.
\end{proof} 

\medskip
\noindent
We are now ready to present our \FPT\ algorithm and prove Theorem~\ref{t-fpt}.

\medskip
\noindent
{\it Proof of Theorem~\ref{t-fpt}.}
Let $k\geq 1$, and let $(G,\alpha,\beta,\ell)$ be an instance of {\sc $k$-Colouring Reconfiguration}, where $G$ is a graph on $n$ vertices, and
$\alpha,\beta$ are two proper $k$-colourings of $G$.
Our algorithm does as follows. First compute the set $A^*$ in $O(n^2)$ time.
By Lemma~\ref{lemma-sp}, we find that $|A^*|\leq  \ell \cdot (k \ell)^{\ell}$.
By Lemma~\ref{lemma-R}, we only have to search for a  path of length at 
most~$\ell$ in $R_k(G)$ among the vertices of $A^*$.  
By allowing consecutive recolourings to be equal we may restrict our search to
$(\alpha\!\rightarrow\!\!\beta)$-recolourings of length exactly $\ell$. 
Use brute force to enumerate all possible sequences of pairs $(v_i,c_i)$,
such that for all $0\le i \le \ell-1$, $v_i$ is a vertex in~$A^*$ and $c_i$ is a 
colour in $\{1,\ldots,k\}$. 
For each such sequence do as follows.
Starting from~$\alpha$, recolour  $v_i$ with colour $c_i$ for $i=0,\ldots, \ell-1$.  As soon as this results in a $k$-colouring that is not proper, 
stop considering the sequence. If not, check whether the resulting colouring is equal to $\beta$. If this happens, then there is a path of length $\ell$
in $R_k(G)$.  Hence, return {\tt yes}. Otherwise, that is, if no sequence has this property, return {\tt no}.  Processing one sequence takes time $O(\ell n^2)$. By using Lemma~\ref{lemma-sp}, the number of sequences is at most $(|A^*| \cdot k)^{\ell} \le  ((\ell \cdot (k\cdot \ell)^{\ell})\cdot k)^{\ell} \le (k\cdot \ell)^{\ell^2 +\ell}$, leading to a total running time of  $O((k\cdot \ell)^{\ell^2 +\ell} \cdot \ell n^2)$.  This completes the proof.\qed

\section{A Lower Bound for Kernelization for $k\geq 4$}\label{s-nokernel}

This section provides a proof for Theorem~\ref{t-ker}, i.e., that $k$-\textsc{Colouring Reconfiguration}, for~$k\geq 4$, does not admit a polynomial kernelization in terms of~$\ell$, unless $\mathsf{NP\subseteq coNP/poly}$ (it is known that the latter would imply a collapse of the polynomial hierarchy).  To prove the result we give a so-called polynomial parameter transformation from a problem that, assuming $\mathsf{NP\nsubseteq coNP/poly}$, is known not to admit a polynomial kernelization and also no polynomial compression.\footnote{A (polynomial) compression is a relaxed form of (polynomial) kernelization: The output may be with respect to \emph{any (possibly unparameterized) problem}.} A polynomial parameter transformation, short PPT, is a standard Karp reduction with the additional property that the parameter value of the returned instance is polynomially bounded in the parameter of the input instance. It is well known and easy to see that a PPT from a source problem without polynomial compression implies that the target problem admits no polynomial compression and hence also no polynomial kernelization (cf.~\cite{BJK14}).

As our source problem we use \textsc{Hitting Set}. This problem takes as input a finite set $U$, a set ${\cal F}\subseteq 2^U$ and an integer~$p$, and asks whether there exists a {\it hitting set} $S\subseteq U$ of size at most~$p$, that is, a set $S$ with $|S|\leq p$ such that every $F\in {\cal F}$ contains at least one element of $S$.  The {\sc Hitting Set} problem can also be formulated as the \textsc{Red-Blue Dominating Set} problem, which takes a bipartite graph with partition classes~$R$ and $B$ and an integer $k$, and asks whether there exists a set $D\subseteq R$ of size at most $k$ such that every vertex of $B$ has at least one neighbour in $D$.  Dom, Lokshtanov, and Saurabh~\cite{DLS09} showed that the \textsc{Red-Blue Dominating Set} problem, parameterized by $k+|B|$, does not admit a polynomial kernelization (unless $\mathsf{\NP\subseteq coNP/poly}$). As $k\leq |B|$ holds for any non-trivial instance, the same result holds with parameter $|B|$ instead of $k+|B|$.   Since the result for \textsc{Red-Blue Dominating Set} makes use of the standard framework of giving or/and-compositions, it is known to also rule out polynomial compressions (cf.~\cite{BJK14}).

\begin{lemma}[\cite{DLS09}]\label{l-hitting}
The \textsc{Hitting Set} problem parameterized by $|{\cal F}|$ does not admit a polynomial compression unless  $\mathsf{\NP\subseteq coNP/poly}$.
\end{lemma}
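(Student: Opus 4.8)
The plan is to deduce the lemma from the kernelization/compression lower bound for \textsc{Red-Blue Dominating Set} due to Dom, Lokshtanov and Saurabh~\cite{DLS09} (as recalled in the paragraph above), by exhibiting a polynomial parameter transformation between the two problems and invoking the standard fact that PPTs preserve the non-existence of polynomial compressions (cf.~\cite{BJK14}).

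Concretely, I would proceed as follows. First, recall from \cite{DLS09} — together with the observation already noted above that $k\le|B|$ in any non-trivial instance — that \textsc{Red-Blue Dominating Set} parameterized by $|B|$ admits no polynomial kernelization, and, since the proof uses the standard or/and-composition framework, no polynomial compression either, unless $\mathsf{\NP\subseteq coNP/poly}$. Next, transform an instance $(H,R,B,k)$ of \textsc{Red-Blue Dominating Set} into the \textsc{Hitting Set} instance $(U,\mathcal{F},p)$ where $U:=R$, where $\mathcal{F}:=\{\,N_H(b):b\in B\,\}$ is the family of neighbourhoods of the blue vertices, and where $p:=k$. A set $D\subseteq R$ dominates every vertex of $B$ if and only if $D$ meets every member of $\mathcal{F}$, so the two instances are equivalent; the transformation is clearly computable in polynomial time; and $|\mathcal{F}|\le|B|$, so the parameter does not grow. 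Hence this map is a PPT from \textsc{Red-Blue Dominating Set} parameterized by $|B|$ to \textsc{Hitting Set} parameterized by $|\mathcal{F}|$. Applying the fact that a PPT from a source problem without polynomial compression yields a target problem without polynomial compression gives the lemma.

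I do not anticipate a genuine difficulty here: the real content — the composition that produces the lower bound — lives in \cite{DLS09}, and what remains is the routine verification that the \textsc{Red-Blue Dominating Set} $\leftrightarrow$ \textsc{Hitting Set} dictionary is parameter-preserving, together with the mild but essential point that a composition-based argument rules out polynomial \emph{compression} and not merely polynomial kernelization. The latter is exactly the form of the bound needed for the subsequent PPT into $k$-\textsc{Colouring Reconfiguration} in the proof of Theorem~\ref{t-ker}.
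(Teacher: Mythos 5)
Your argument is correct and is essentially the same as the paper's: the paper also derives the lemma from the Dom--Lokshtanov--Saurabh lower bound for \textsc{Red-Blue Dominating Set} parameterized by $|B|$, using the standard dictionary $U\leftrightarrow R$, $\mathcal{F}\leftrightarrow\{N_H(b):b\in B\}$, $p\leftrightarrow k$, and the observation that the or/and-composition framework rules out polynomial compression and not merely polynomial kernelization. The paper simply presents this as a short prose justification preceding the (cited) lemma rather than as a displayed proof, but the content is identical to what you wrote.
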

We are now ready to prove Theorem~\ref{t-ker}.  The main idea for the reduction is to create a~$4$-coloured tree that serves as a selection gadget for each set, which requires a recolouring at its root. This in turn requires a chain of earlier recolourings starting in one of the leaves; the selection of possible leaves encodes the elements of the set. Finally, recolouring any leaf requires a recolouring in a set of vertices corresponding to the ground set; this encodes the selection of a hitting set. Crucially, the height of the tree construction, which factors into the number~$\ell$ of needed recolourings, can be bounded polynomially in the input parameter~$m=|\F|$.

\medskip
\noindent
{\it Proof of Theorem~\ref{t-ker}.}  By Lemma~\ref{l-hitting} it suffices to show that there is a polynomial parameter transformation from \textsc{Hitting Set} parameterized by $|{\cal F}|$  to $k$-\textsc{Colouring Reconfiguration} parameterized by $\ell$. 

Let $(U,\F,p)$ be an instance of \textsc{Hitting Set}. Let $m=|\F|$. 
We give a polynomial-time construction of an equivalent instance~$(G,\alpha,\beta,\ell)$ of the $k$-\textsc{Colouring Reconfiguration} problem, 
where~$\ell$ is polynomially bounded in~$m$. Note that~$p<m$ or else the instance~$(U,\F,p,m)$ is trivially yes and our transformation is trivial.

\medskip
\noindent
\emph{Construction.} We begin with a standard argument for bounding the size of~$U$: If any two elements of~$U$ occur exactly in the same sets of~$\F$ then we will never need both for a minimum hitting set and can safely discard either one of them. Thus, without loss of generality, we may assume that no two such elements exist, which implies that~$|U|\leq 2^{|\F|}=2^m$. Thus, our final parameter value~$\ell$ may depend polynomially on~$\log |U|=O(m)$. For convenience let~$n=|U|$ and let~$U=\{1,\ldots,n\}$.

The graph~$G$ will consist of four components:
\begin{enumerate}
 \item Two adjacent vertices~$s,t$ with~$\alpha(s)=\beta(t)=2$ and~$\alpha(t)=\beta(s)=3$. These are the only two vertices with different colours in~$\alpha$ and~$\beta$.
 \item A clique of~$k$ vertices~$u_1,\ldots,u_k$ with colours~$\alpha(u_i)=\beta(u_i)=i$ that will be used to control permissible colours for all other vertices.
 \item An independent set of vertices~$v_1,\ldots,v_n$, one for each element of~$U$, each with 
colour~$\alpha(v_i)=\beta(v_i)=4$ that will be used to simulate selection of a hitting set of size at most~$p$.
 \item One \emph{selection gadget} for each set~$F\in\F$ that simulates a selection of one element of the hitting set to hit~$F$. These will be described later as they are somewhat more involved.
\end{enumerate}
Clearly, since each vertex of the clique is adjacent to all other colours but its own, it is impossible to recolour any vertex~$u_i$ and obtain a proper~$k$-colouring. Thus, we can use adjacency to parts of the clique to forbid certain colours from being used for other vertices. Generally, all other vertices are made adjacent to all of~$u_5,\ldots,u_k$, effectively reducing the setting to the case that~$k=4$. (Mainly this ensures that our reduction works for all values~$k\geq 4$.)
Additionally, all vertices~$v_1,\ldots,v_n$ are made adjacent to~$u_2$ and~$u_3$, which in total allows only colours~$1$ and~$4$ to be used for the independent set.

Finally, we restrict vertex~$s$ to colours~$2$ and~$3$ and vertex~$t$ to colours~$2$,~$3$, and~$4$. Note that if only~$2$ and~$3$ were possible for both~$s$ and~$t$, then it would be impossible to recolour even just the graph on~$s$ and~$t$. Using the additional option of colour~$4$ for~$t$ the following sequence works: (1) recolour~$t$ to~$4$, (2) recolour~$s$ to~$3$, and (3) recolour~$t$ to~$2$. The hitting set question will be encoded in a part of the graph that requires recolouring in order not to obstruct colouring~$t$ with colour~$4$ (and will be reverted once~$t$ is coloured~$2$).

We will now describe the construction of the selection gadgets.
The basic building block is a claw on vertices~$a_\dag,b_\dag,c_\dag,d_\dag$ with~$c_\dag$ the center vertex (adjacent to~$a_\dag,b_\dag,d_\dag$) and with the following $\alpha$ and $\beta$ colours and forbidden colours:
\begin{enumerate}
 \item For~$a_\dag$ we have~$\alpha(a_\dag)=\beta(a_\dag)=2$, and, using adjacency to the~$k$-clique, only colours~$2$ and~$4$ allow proper~$k$-colourings.
 \item Similarly, for~$b_\dag$ we have~$\alpha(b_\dag)=\beta(b_\dag)=3$, and only colours~$3$ and~$4$ are possible.
 \item For the center vertex~$c_\dag$ we have~$\alpha(c_\dag)=\beta(c_\dag)=1$, and only colours~$1$,~$2$, and~$3$ are possible.
 \item For vertex~$d_\dag$ we have~$\alpha(d_\dag)=\beta(d_\dag)=4$, and only colours~$1$ and~$4$ are possible.
\end{enumerate}
The idea is that connecting such claws in a tree-like fashion gives the desired selection gadget. 
For the basic functionality that is to recolour~$d_\dag$ with~$1$ it is necessary to first recolour~$c_\dag$ to either~$2$ or~$3$. This in turn first requires a recolouring of (accordingly) either~$a_\dag$ to~$4$ or~$b_\dag$ to~$4$. Now if both~$a_\dag$ and~$b_\dag$ are adjacent to, say,~$d_\ddag$ and~$d_{\ddag'}$ of further such claws then the same argumentation continues since we again would need to recolour first~$d_\ddag$ from~$4$ to~$1$ or~$d_{\ddag'}$ from~$4$ to~$1$.
 
Now, let us describe the tree-like arrangement in more detail. For convenience, let us assume that~$n=2^r$ for some integer~$r$, which we can achieve by adding at most~$n-1$ dummy elements to~$U$ that never occur in any set (thereby at most doubling~$n$). We make copies of the claw construction that we just explained, for all values of
\[
\dag\in\{(F,x,y)\mid F\in\F, x\in\{0,\ldots,r-1\}, y\in\{1,\ldots,2^x\}\}.
\]
We connect these claws as follows (see Figure~\ref{figure:lowerboundtree}):
\begin{enumerate}
 \item Each vertex~$d_{F,0,1}$ is made adjacent to the vertex~$t$.
 \item Each vertex~$d_{F,x,y}$, with~$x\in\{1,\ldots,r-1\}$ is made adjacent to~$a_{F,x-1,(y+1)/2}$ if~$y$ is odd, and to~$b_{F,x-1,y/2}$ if~$y$ is even.
 \item Each vertex~$a_{F,r-1,y}$ is made adjacent to vertex~$v_{2y-1}$ of the independent set.
 \item Each vertex~$b_{F,r-1,y}$ is made adjacent to vertex~$v_{2y}$.
\end{enumerate}
The idea is that to recolour~$d_{F,0,1}$ to~$1$ it is ultimately necessary to first recolour some vertex~$v_j$ in the independent set from~$4$ to~$1$. This in turn allows to recolour the adjacent~$a_{F,\cdot,\cdot}$ or~$b_{F,\cdot,\cdot}$ vertex to~$4$ and then propagate possible recolourings towards~$d_{F,0,1}$.

\tikzstyle{vertex}=[rectangle split, rectangle split parts=2,draw]

\begin{figure}[p]
\centering
\begin{tikzpicture}[rotate=90,scale=0.7,every text node part/.style={text centered}]

\node[vertex] (s) at (0,0) {$\{\boldsymbol{2},\underline{3}\}$\nodepart{second}$s$};
\node[vertex] (t) at (0,-3) {$\{\underline{2},\boldsymbol{3},4\}$\nodepart{second}$t$};

\node[vertex] (dF01) at (-6,-3) {$\{1,\underline{\boldsymbol{4}}\}$\nodepart{second}$d_{F,0,1}$};
\node[vertex] (cF01) at (-6,-6) {$\{\underline{\boldsymbol{1}},2,3\}$\nodepart{second}$c_{F,0,1}$};
\node[vertex] (bF01) at (-3,-6) {$\{\underline{\boldsymbol{3}},4\}$\nodepart{second}$b_{F,0,1}$};
\node[vertex] (aF01) at (-9,-6) {$\{\underline{\boldsymbol{2}},4\}$\nodepart{second}$a_{F,0,1}$};

\node[vertex] (dF11) at (-9,-8.5) {$\{1,\underline{\boldsymbol{4}}\}$\nodepart{second}$d_{F,1,1}$};
\node[vertex] (cF11) at (-9,-11) {$\{\underline{\boldsymbol{1}},2,3\}$\nodepart{second}$c_{F,1,1}$};
\node[vertex] (bF11) at (-7,-11) {$\{\underline{\boldsymbol{3}},4\}$\nodepart{second}$b_{F,1,1}$};
\node[vertex] (aF11) at (-11,-11) {$\{\underline{\boldsymbol{2}},4\}$\nodepart{second}$a_{F,1,1}$};

\node[vertex] (dF12) at (-3,-8.5) {$\{1,\underline{\boldsymbol{4}}\}$\nodepart{second}$d_{F,1,2}$};
\node[vertex] (cF12) at (-3,-11) {$\{\underline{\boldsymbol{1}},2,3\}$\nodepart{second}$c_{F,1,2}$};
\node[vertex] (bF12) at (-1,-11) {$\{\underline{\boldsymbol{3}},4\}$\nodepart{second}$b_{F,1,2}$};
\node[vertex] (aF12) at (-5,-11) {$\{\underline{\boldsymbol{2}}\}$\nodepart{second}$a_{F,1,2}$};

\node[vertex] (dFF01) at (+6,-3) {$\{1,\underline{\boldsymbol{4}}\}$\nodepart{second}$d_{F',0,1}$};
\node[vertex] (cFF01) at (+6,-6) {$\{\underline{\boldsymbol{1}},2,3\}$\nodepart{second}$c_{F',0,1}$};
\node[vertex] (bFF01) at (+9,-6) {$\{\underline{\boldsymbol{3}},4\}$\nodepart{second}$b_{F',0,1}$};
\node[vertex] (aFF01) at (+3,-6) {$\{\underline{\boldsymbol{2}},4\}$\nodepart{second}$a_{F',0,1}$};

\node[vertex] (dFF11) at (+3,-8.5) {$\{1,\underline{\boldsymbol{4}}\}$\nodepart{second}$d_{F',1,1}$};
\node[vertex] (cFF11) at (+3,-11) {$\{\underline{\boldsymbol{1}},2,3\}$\nodepart{second}$c_{F',1,1}$};
\node[vertex] (bFF11) at (+5,-11) {$\{\underline{\boldsymbol{3}},4\}$\nodepart{second}$b_{F',1,1}$};
\node[vertex] (aFF11) at (+1,-11) {$\{\underline{\boldsymbol{2}}\}$\nodepart{second}$a_{F',1,1}$};

\node[vertex] (dFF12) at (+9,-8.5) {$\{1,\underline{\boldsymbol{4}}\}$\nodepart{second}$d_{F',1,2}$};
\node[vertex] (cFF12) at (+9,-11) {$\{\underline{\boldsymbol{1}},2,3\}$\nodepart{second}$c_{F',1,2}$};
\node[vertex] (bFF12) at (+11,-11) {$\{\underline{\boldsymbol{3}},4\}$\nodepart{second}$b_{F',1,2}$};
\node[vertex] (aFF12) at (+7,-11) {$\{\underline{\boldsymbol{2}},4\}$\nodepart{second}$a_{F',1,2}$};

\node[vertex] (v1) at (-9,-15) {$\{1,\underline{\boldsymbol{4}}\}$\nodepart{second}$v_1$};
\node[vertex] (v2) at (-3,-15) {$\{1,\underline{\boldsymbol{4}}\}$\nodepart{second}$v_2$};
\node[vertex] (v3) at (3,-15) {$\{1,\underline{\boldsymbol{4}}\}$\nodepart{second}$v_3$};
\node[vertex] (v4) at (9,-15) {$\{1,\underline{\boldsymbol{4}}\}$\nodepart{second}$v_4$};

\draw[very thick] (s) -- (t);

\draw[very thick] (t) -- (dF01) -- (cF01);
\draw[very thick] (aF01) -- (cF01) -- (bF01);

\draw[very thick] (aF01) -- (dF11) -- (cF11);
\draw[very thick] (aF11) -- (cF11) -- (bF11);

\draw[very thick] (bF01) -- (dF12) -- (cF12);
\draw[very thick] (aF12) -- (cF12) -- (bF12);

\draw[very thick] (t) -- (dFF01) -- (cFF01);
\draw[very thick] (aFF01) -- (cFF01) -- (bFF01);

\draw[very thick] (aFF01) -- (dFF11) -- (cFF11);
\draw[very thick] (aFF11) -- (cFF11) -- (bFF11);

\draw[very thick] (bFF01) -- (dFF12) -- (cFF12);
\draw[very thick] (aFF12) -- (cFF12) -- (bFF12);

\draw[very thick] (aF11.east) -- (v1.west);
\draw[very thick] (bF11.east) -- (v2.west);
\draw[very thick] (aF12.east) -- (v3.west);
\draw[very thick] (bF12.east) -- (v4.west);

\draw[very thick] (aFF11.east) -- (v1.west);
\draw[very thick] (bFF11.east) -- (v2.west);
\draw[very thick] (aFF12.east) -- (v3.west);
\draw[very thick] (bFF12.east) -- (v4.west);

\end{tikzpicture}
\caption{\label{figure:lowerboundtree} A small example of the lower bound construction for $(U,\F,p)$ and~$k=4$ when~$U=\{1,2,3,4\}$ and~$\F=\{F,F'\}$ with~$F=\{1,2,4\}$ and~$F'=\{2,3,4\}$. For ease of presentation the clique on vertices~$u_1,\ldots,u_4$ is not shown; instead sets in the nodes state the allowed colours, where the boldface number is the initial (alpha) colour and the underlined number is the target (beta) colour.}
\end{figure}
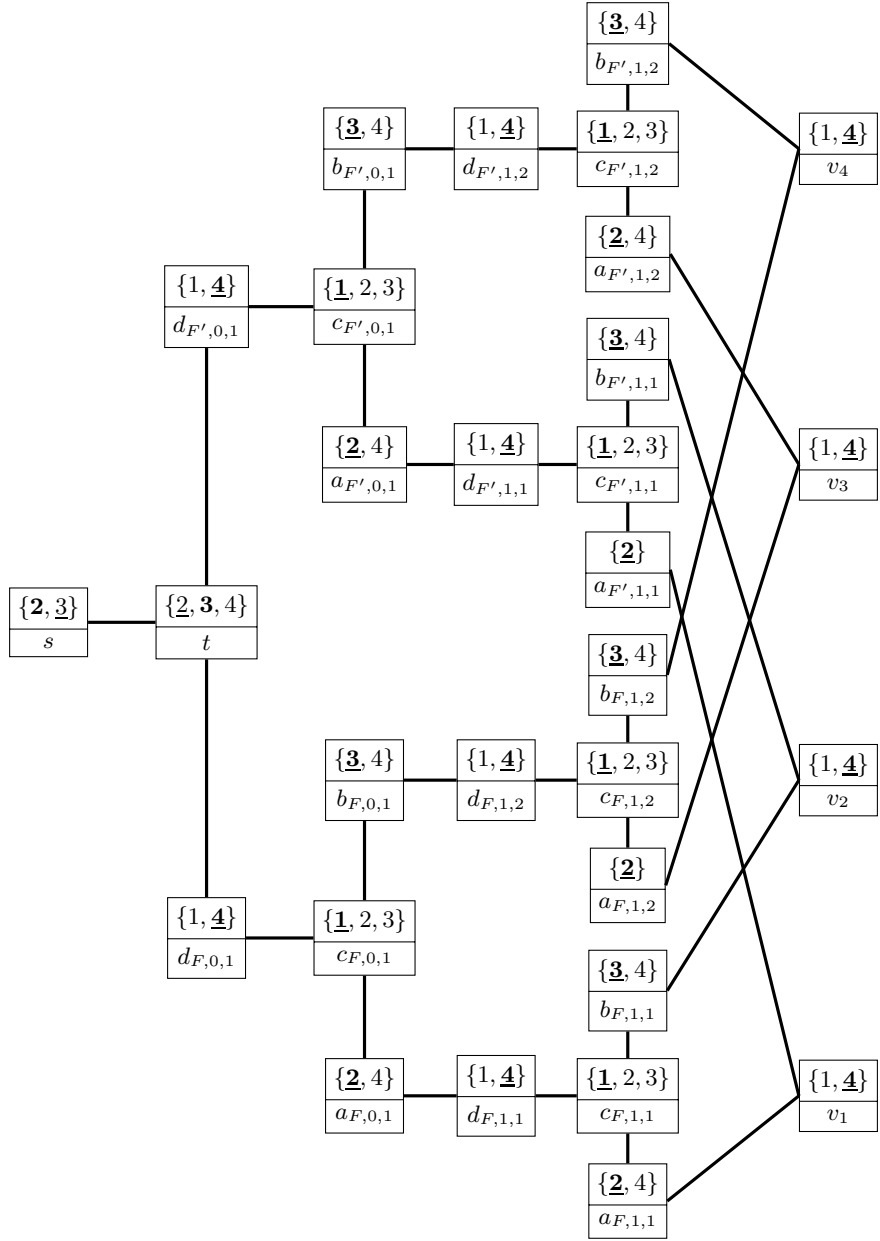

Note that so far we have made no distinction between trees made for different sets~$F\in\F$; we now make the following modifications to vertices~$a_{F,r-1,y}$ and~$b_{F,r-1,y}$: If~$j\in U$ is not contained in~$F$ then we do not want that a recolouring of~$v_j$ in the independent set allows a recolouring in the tree for~$F$. Thus, we use adjacency to the~$k$-clique to forbid the adjacent~$a$- or~$b$-vertex from taking colour~$4$ (which is exactly the one recolouring option that would have been possible by recolouring~$v_j$). Formally, if~$j$ is odd then we make~$a_{F,r-1,(j+1)/2}$ adjacent to vertex~$u_4$ of the~$k$-clique (forbidding colour~$4$ for~$a_{F,r-1,(j+1)/2}$), and if~$j$ is even then we make~$b_{F,r-1,j/2}$ adjacent to~$u_4$.

This completes the construction of the graph~$G$. We have already specified colours under~$\alpha$ and~$\beta$ for all vertices, and we recall that only~$s$ and~$t$ have different colours with respect to $\alpha$ and $\beta$. The necessary recolouring of~$t$ to~$4$, however, will cause a substantial number of recolourings and, along the way, capture the selection of a hitting set for~$\F$. We define the maximum number~$\ell$ of recolouring steps as
\[
\ell=3+2p+2m\cdot 3\log n.
\]
Intuitively, this value is intended as follows: (1) $p$ recolourings for the vertices in the independent set that correspond to a~$p$-hitting set, (2) $m\cdot 3\log n$ recolourings to propagate the selected hitting set up to all vertices~$d_{F,0,1}$ (giving them colour~$1$), (3) three recolourings for~$s$ and~$t$, namely~$t\rightarrow 4$, $s\rightarrow 3$, and~$t\rightarrow 2$ (4) $m\cdot 3\log n$ recolourings to undo the hitting set propagation, and (5) $p$ recolourings to undo the selection of the hitting set. We return~$(G,\alpha,\beta,\ell)$ as the output of our transformation. Since~$p<m$ and~$\log n=O(m)$, we have that~$\ell$ is polynomially bounded in~$m$, in fact~$\ell=O(m^2)$, as claimed. Clearly, our transformation can be performed in polynomial time. It remains to prove correctness.

\medskip
\noindent
\emph{Completeness.} Assume that the initial instance~$(U,\F,m,p)$ of \textsc{Hitting Set($m$)} is yes and let~$S$ be a hitting set of size at most~$p$ for~$\F$. We outline a recolouring procedure following exactly the stated intuition for the budget~$\ell$. (All steps are strictly serial but we do not insist on an ordering if it is immaterial.):
\begin{enumerate}
 \item Recolour all~$v_j$ from~$4$ to~$1$ for all~$j\in S$. The only neighbours are~$a$- or~$b$-vertices that have colours~$2$ or~$3$. This uses at most~$p$ steps.
 \item For each~$F\in\F$, recolour bottom-up the vertices in the tree-like claw structure, beginning with some~$a$- or~$b$-vertex whose adjacent independent set vertex has been recoloured from~$4$ to~$1$. Since~$S$ is a hitting set for~$\F$, such a vertex can always be found.
 \begin{enumerate}
  \item Recolour the~$a$ (or~$b$) vertex from~$2$ (or~$3$) to~$4$; there is no conflict with the~$c$-vertex since that has colour~$1$, same as the adjacent independent set vertex.
  \item Recolour the~$c$-vertex from~$1$ to either~$2$ or~$3$; only one choice is possible depending on whether we previously recoloured the~$a$ or the~$b$-vertex.
  \item Recolour the~$d$-vertex from~$4$ to~$1$; there is no conflict with the~$c$-vertex of (now) colour~$2$ or~$3$.
 \end{enumerate}
 At this point the argument can be repeated since the recolouring of, say,~$d_{F,x,y}$ with~$x\geq 1$, to~$1$ permits a recolouring of~$a_{F,x-1,(y+1)/2}$ or~$b_{F,x-1,y/2}$ to~$4$ depending on the parity of~$y$. Ultimately, we end up with~$d_{F,0,1}$ getting colour~$1$ (which does not conflict with~$t$ being colour~$3$).
 
 Over all sets~$F$ this uses~$m\cdot 3\log n$ steps since the tree arrangement has height~$r=\log n$ and we recolour three vertices in each claw.
 \item We then recolour~$t$ to~$4$,~$s$ to~$3$, and~$t$ to~$2$. This costs three steps and fulfills the requirement of~$\beta$ for both vertices. Clearly there are no conflicts.
 \item We then trace back the recolourings in each tree structure, using~$m\cdot 3\log n$ steps, followed by undoing the recolourings on vertices~$v_j$ corresponding to the hitting set~$S$, using at most~$p$ steps. This meets the requirement~$\beta$ for all vertices other than~$s$ and~$t$. (Note that~$d_{F,0,1}$ changing back from~$1$ to~$4$ makes no conflicts with~$t$ which now has colour~$2$.)
\end{enumerate}
Overall we obtain a recolouring sequence of length at most~$\ell$, as claimed. Thus $(G,\alpha,\beta,\ell)$ is indeed yes for $k$-\textsc{Colouring Reconfiguration}.

\medskip
\noindent
\emph{Soundness.} Let us assume that the obtained instance~$(G,\alpha,\beta,\ell)$ is yes for~$k$-\textsc{Colouring Reconfiguration} and let~$\alpha=\gamma_0,\ldots,\gamma_\ell=\beta$ be a sequence of proper recolourings. (We can repeat the last colouring in case that less than~$\ell$ colouring steps are needed.) We begin with some basic arguments about the behaviour of~$\gamma_0,\ldots,\gamma_\ell$.

If~$t$ would never receive colour~$4$ in any~$\gamma_i$ then it can be easily seen that the sequence must be infeasible: Indeed, this would restrict~$s$ to colours~$2$ and~$3$, and~$t$ to colours~$2$ and~$3$. This makes it impossible to, effectively, swap the colours of~$s$ and~$t$ since they are adjacent. Thus, let~$z\in\{1,\ldots,\ell\}$ be the smallest integer such that~$\gamma_z(t)=4$. Clearly, since~$\beta(t)=2$ and~$\alpha(s)=2\neq 3=\beta(s)$ at least~$3$ recolouring steps address vertices~$s$ and~$t$, leaving at most~$2p+2m\cdot 3\log n$ for the remaining vertices.

Since~$\alpha(d_{F,0,1})=4$ for all~$F\in\F$ it follows that each~$d_{F,0,1}$ must be recoloured to a colour other than~$4$ before step~$z$ (in which~$t$ for the first time is coloured~$4$), and recall that the adjacency to the~$k$-clique forbids all colours other than~$1$ and~$4$. As discussed earlier, this propagates the need for earlier recolourings through each tree-like arrangement of claws. Ultimately, for each~$F\in\F$ at least one~$a$- or one~$b$-vertex with index~$(F,r-1,\cdot)$ must be recoloured before step~$z$, along with a total~$3\log n$ vertices in that tree. Since all these vertices need to be reverted to their original colours later, this leaves only a budget of at most~$2p$ for the independent set vertices.

In the independent set, there are forced recolourings from~$4$ to~$1$ for all~$v_j$ that have 
adjacent recoloured~$a$- or~$b$-vertices with index~$(\cdot,r-1,\cdot)$. 
 Since all these must return to colour~$4$ in~$\gamma_\ell=\beta$ latest, at most~$p$ of these vertices can ever be recoloured. Let~$S\subseteq U$ denote the elements of~$U$ that correspond to these vertices. We will prove that~$S$ is a hitting set for~$\F$; clearly~$|S|\leq p$.

Fix any set~$F\in \F$. We already argued that a recolouring of~$d_{F,0,1}$ from~$4$ to~$1$ ultimately requires a prior recolouring of some~$a$- or~$b$-vertex with index~$(F,r-1,\cdot)$. Recall, however, that we disallowed such recolourings whenever the corresponding element~$j\in U$ is not contained in~$F$. (Here, corresponding refers to our construction where~$a_{F,r-1,y}$ is adjacent to~$v_{2y-1}$ and~$b_{F,r-1,y}$ is adjacent to~$v_{2y}$.) Thus, we must have recoloured an~$a$- or~$b$-vertex with index~$(F,r-1,y)$ such that the corresponding element~$j\in\{2y-1,2y\}$ is contained in~$F$. This in turn, as discussed earlier, requires a prior recolouring of~$v_j$ which implies that~$j\in S$. Thus,~$S$ indeed has a nonempty intersection with~$F$, implying that~$S$ is a hitting set for~$\F$, as claimed.\qed

\section{Conclusions} \label{sec:concl}

We showed that 
{\sc $k$-Colouring Reconfiguration} is 
fixed-parameter tractable for any fixed $k\geq 1$, when  parameterized by the number of recolourings $\ell$. 
It is a natural question to ask whether a single-exponential \FPT\ algorithm
can be achieved for this problem. 
We also proved that the $k$-\textsc{Colouring Reconfiguration} problem is polynomial-time solvable for $k=3$, which solves the open problem of Cereceda et al.~\cite{CHJ06b}, and
that it has no polynomial kernel for all $k\geq 4$, when parameterized by~$\ell$ (up to 
the standard complexity assumption that
$\mathsf{NP\nsubseteq coNP/poly}$).

\medskip
\noindent
{\it Acknowledgements.} 
We are grateful to several reviewers for insightful comments that greatly improved our presentation.

\end{document}